\newcommand{\ie}{{\em i.e., }}
\newcommand{\Ie}{{\em I.e., }}
\newcommand{\eg}{{\em e.g., }}
\newcommand{\Eg}{{\em E.g., }}
\newcommand{\modelI}{{\tt{Queue I}}}
\newcommand{\modelII}{{\tt{Queue II}}}
\newtheorem{theorem}{Theorem}
\newtheorem{example}{Example}
\begin{document}

\setcopyright{acmcopyright}

\CopyrightYear{2016}
\setcopyright{acmcopyright}
\conferenceinfo{CarSys'16,}{October 03-07, 2016, New York City, NY, USA} 
\isbn{978-1-4503-4250-6/16/10}\acmPrice{\$15.00} 
\doi{http://dx.doi.org/10.1145/2980100.2980105}

%

\title{Connectivity-Aware Traffic Phase Scheduling for Heterogeneously Connected Vehicles}

\numberofauthors{2} 
%
\author{
\alignauthor Shanyu Zhou\\
       \affaddr{University of Illinois at Chicago}\\
       \email{\tt szhou45@uic.edu}
\alignauthor Hulya Seferoglu\\
       \affaddr{University of Illinois at Chicago}\\
       \email{\tt hulya@uic.edu}
}

\maketitle

\begin{abstract}
We consider a transportation system of heterogeneously connected vehicles, where not all vehicles are able to communicate. Heterogeneous connectivity in transportation systems is coupled to practical constraints such that (i) not all vehicles may be equipped with devices having communication interfaces, (ii) some vehicles may not prefer to communicate due to privacy and security reasons, and (iii) communication links are not perfect and packet losses and delay occur in practice. In this context, it is crucial to develop control algorithms by taking into account the heterogeneity. In this paper, we particularly focus on making traffic phase scheduling decisions. We develop a connectivity-aware traffic phase scheduling algorithm for heterogeneously connected vehicles that increases the intersection efficiency (in terms of the average number of vehicles that are allowed to pass the intersection) by taking into account the heterogeneity. The simulation results show that our algorithm significantly improves the efficiency of intersections as compared to the baselines. 

\end{abstract}

\keywords{Cyber-physical systems, transportation systems, connected vehicles, heterogeneous communication.}

\section{Introduction}\label{sec:intro} 
The increasing population and growing cities introduce several challenges in metropolitan areas, and one of the most challenging areas is transportation systems. In particular, the rapidly increasing number of vehicles in metropolitan transportation systems, has introduced several challenges including higher traffic congestion, delay, accidents, energy consumption, and air pollution. For example, the average of yearly delay per auto commuter due to congestion was 38 hours, and it was as high as 60 hours in large metropolitan areas in 2011 \cite{urban_mobility_report}. The congestion caused 2.9 billion gallons of wasted fuel in 2011, and this figure keeps increasing yearly \cite{urban_mobility_report}, \eg the increase was 3.8\% in Illinois between years 2011 and 2012 \cite{times}. This trend poses a challenge for efficient transportation systems, so new traffic management mechanisms are needed to address the ever increasing transportation challenges.

\begin{figure}[t!]
\begin{center}
\vspace{5pt}
\subfigure[Phase I ($\phi = 1$)]{ \scalebox{.5}{\includegraphics{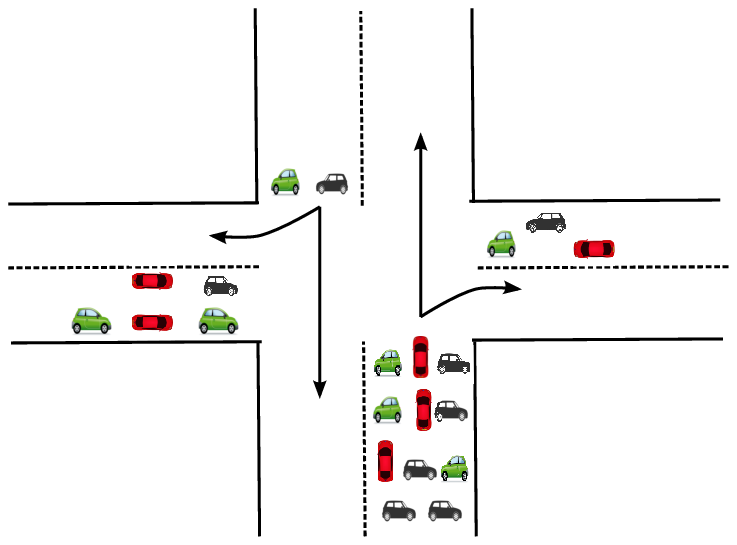}} } 
\subfigure[Phase II ($\phi = 2$)]{ \scalebox{.5}{\includegraphics{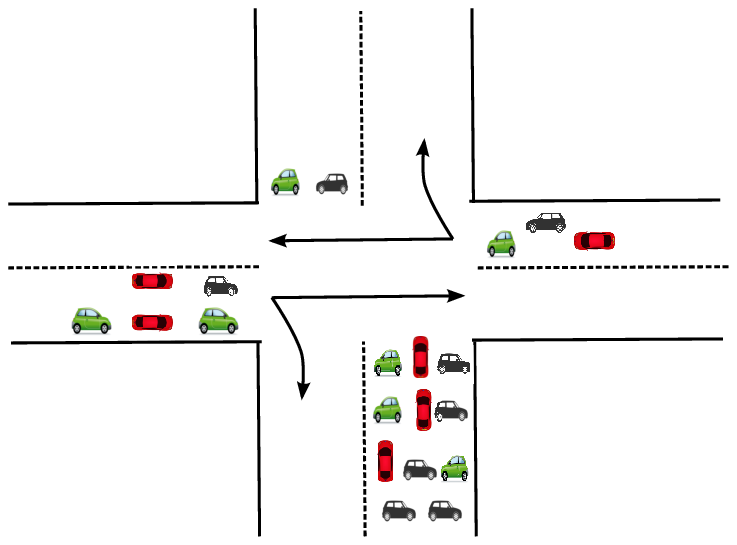}} } 
\subfigure[Phase III ($\phi = 3$)]{ \scalebox{.5}{\includegraphics{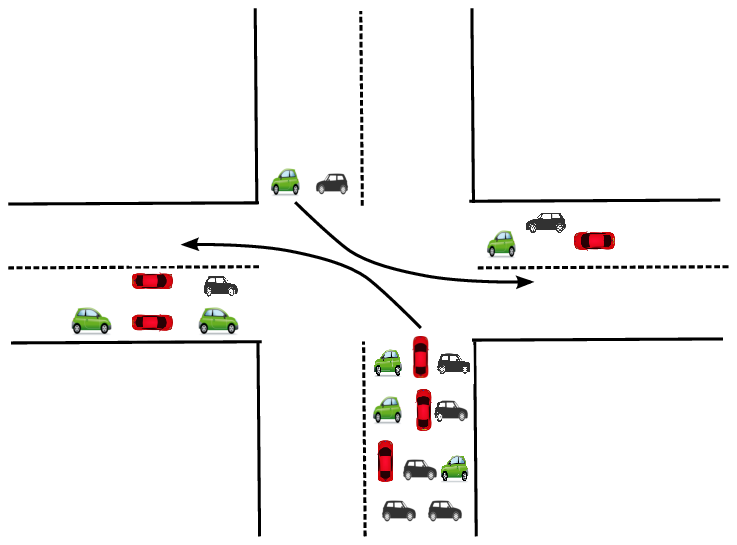}} }
\subfigure[Phase IV ($\phi = 4$)]{ \scalebox{.5}{\includegraphics{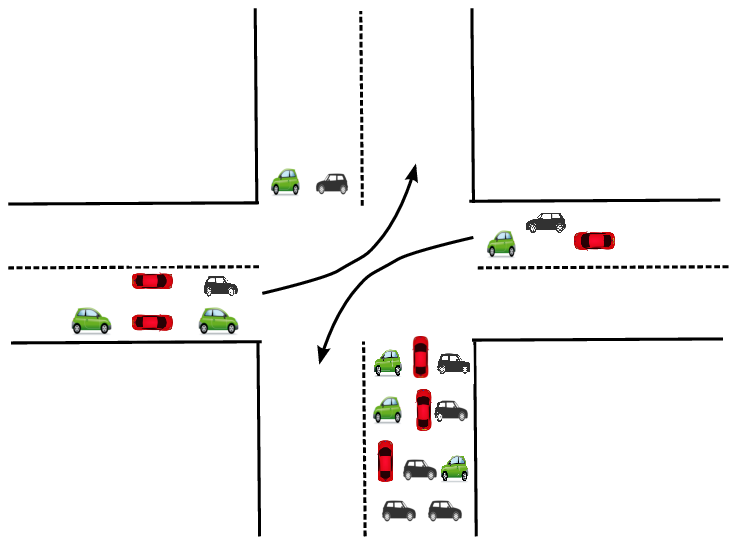}} }
\end{center}
\begin{center}
\vspace{-10pt}
\caption{\label{fig:phases} An example intersection with four possible traffic phases. }
\vspace{-25pt}
\end{center}
\end{figure}

Fortunately, advances in communication and networking theories offer vast amount of opportunities to address ever increasing challenges in transportation systems. In particular, connected vehicles, \ie vehicles that are connected to the Internet via cellular connections and to each other via device-to-device (D2D) connections such as Bluetooth or WiFi-Direct \cite{aboudolas}, are able to transmit and receive information to improve the control and management of traffic, which has potential of reducing congestion, delay, energy, and improving reliability. In this context, it is crucial to understand how heterogeneous communication affects the performance of transportation systems.

Heterogeneity in transportation systems is coupled to practical constraints such that (i) not all vehicles may be equipped with devices having communication interfaces, (ii) some vehicles may not prefer to communicate due to privacy and security reasons, and (iii) communication links are not perfect and packet losses and delay occur in practice. It is crucial to develop control algorithms by taking into account the heterogeneity. In this paper, we particularly focus on making traffic phase scheduling decisions. The next two examples illustrate the traffic phase scheduling problem and the impact of heterogeneous communications on the scheduling. 

\vspace{-5pt}
\begin{example} \label{ex:sch_example}
Let us consider Fig.~\ref{fig:phases}, which shows an isolated intersection, and all four possible traffic light phases. Traffic lights could be configured in four different phases: Phases I, II, III, and IV. \Eg Phase I corresponds to the case that only north-south and south-north bounds are allowed to pass through the intersection. The traffic light scheduling determines the phase that should be activated. Note that only one phase could be activated at a time. It is clear that scheduling decisions should be made based on the congestion levels of different directions (or traffic bounds). For example, selecting either Phase I or Phase III in the specific example of Fig.~\ref{fig:phases} looks a better decision as compared to Phase II or Phase IV, because Phase I and Phase III have a larger number of vehicles in their corresponding queues. 
\end{example}

Example~\ref{ex:sch_example} is a widely known problem in network control and optimization theory, and the optimal solution to this problem is the popular max-weight algorithm \cite{tassiulas}. The broader idea behind max-weight algorithm is to prioritize the scheduling decisions with larger weights, which corresponds to congestion level, loss probabilities, and link qualities. The max-weight idea is applied to transportation systems as well in previous work \cite{gregoire, varaiya_13, varaiya, wongpir} that schedules traffic phases according to congestion levels, which has potential of allowing more vehicles to pass and reduce waiting times at intersections. This approach works well in a scenario that the directions of all vehicles are known a-priori. For example, if all devices communicate with the traffic light in terms of their intentions about their directions (\eg turn right, go straight, etc.), the traffic light determines which phase to activate using the max-weight scheduling algorithm. However, due to heterogeneity of communication in connected vehicles, only a percentage of vehicles communicate their intentions. In this heterogeneous setup, new connectivity-aware traffic phase scheduling algorithms are needed as illustrated in the next example. 

\begin{figure}[t!]
\begin{center}
\subfigure[Only the first vehicle communicates]{ \scalebox{1.0}{\includegraphics{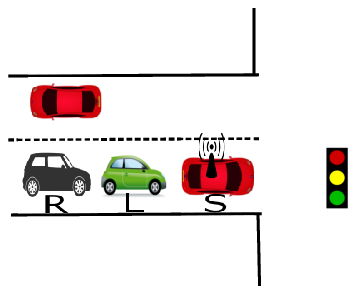}} }
\subfigure[Only the second vehicle communicates]{ \scalebox{1.0}{\includegraphics{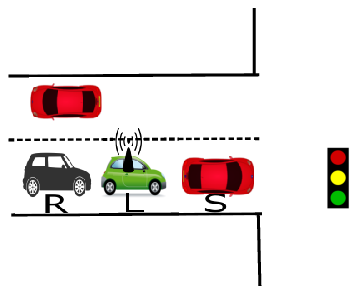}} }
\end{center}
\begin{center}
\vspace{-15pt}
\caption{\label{fig:intro_exampleFig} An example single-lane intersection, where vehicles are going straight, turning left and turning right respectively.}
\vspace{-30pt}
\end{center}
\end{figure}

\vspace{-5pt}
\begin{example} \label{ex:sch_example_cont}
Let us consider Fig. \ref{fig:intro_exampleFig}, which shows one of the four incoming traffic lanes in an intersection. This is a one-way single-lane road, where we call the first vehicle at the intersection as the head-of-line (HoL) vehicle. In Fig. \ref{fig:intro_exampleFig}(a), the HoL vehicle has communication ability, and the vehicles are going straight, turning left, and turning right, respectively. In this case, the traffic light knows that the HoL vehicle is going straight (because the HoL vehicle communicates), so it arranges its phase accordingly. 

Now let us consider Fig. \ref{fig:intro_exampleFig}(b), where the directions of vehicles are the same; \ie straight, left, and right. Yet, in this scenario HoL vehicle does not communicate, but only the vehicle behind HoL communicates. In this case, although the traffic light knows that the second vehicle is going to the left, it has no idea of the HoL vehicle's intention. If the traffic phase, possibly determined as a solution to the max-weight algorithm, does not match the intention of the HoL vehicle, then the HoL vehicle blocks the other vehicles at the intersection, and no vehicles can pass. Similarly, HoL blocking can be observed in more involved multiple-lane scenarios \cite{allerton15_tech}. As seen, the max-weight algorithm may not be optimal in some scenarios due to heterogeneous connectivity, which makes the development of new scheduling algorithms, by taking into account heterogeneity, crucial. 
\hfill $\Box$
\end{example}

In this paper, we develop a connectivity-aware traffic phase scheduling algorithm by taking into account heterogeneous communications of connected vehicles. Our approach follows a similar idea to the max-weight scheduling algorithm, which makes scheduling decisions based on congestion levels at intersections. However, our algorithm, which we name Connectivity-Aware Max-Weight (CAMW), is fundamentally different from the max-weight as we take into account heterogeneous communications while determining congestion levels. In particular, CAMW has two critical components to determine congestion: (i) Expectation: This component calculates the expected number of vehicles that can pass through the intersection at every phase based on the number of vehicles, and the percentage of communicating vehicles at the intersection. (ii) Learning: This component learns the directions of vehicles even if the vehicles do not directly communicate with the traffic light. The expectation  and learning components of our algorithm operate together in harmony to make better decision on traffic phase scheduling. The simulation results demonstrate that CAMW algorithm significantly improves the intersection efficiency (in terms of the average number of vehicles that are allowed to pass the intersection) over the baseline algorithm; max-weight. The following are the key contributions of this work:

\begin{itemize}
\item We investigate the impact of heterogeneous communication on traffic phase scheduling problem in transportation networks. 
We develop a connectivity-aware traffic scheduling algorithm, which we name Connectivity-Aware Max-Weight (CAMW), by taking into account the congestion levels at intersections and the heterogeneous communications.
\item The crucial parts of CAMW are expectation and learning components. In the expectation component, we characterize the expected number of vehicles that can pass through the intersections by taking into account the heterogeneous connectivity. In the learning component, we infer the directions of vehicles even if they do not directly communicate. The expectation and learning components collectively determine the number of vehicles that can pass through the intersections.
\item We evaluate CAMW via simulations, which confirm our analysis, and show that our algorithm significantly improves intersection efficiency as compared to the baseline; the max-weight algorithm.
\end{itemize}

The structure of the rest of this paper is as follows. Section \ref{sec:related} presents the related work. Section \ref{sec:system model} introduces the system model. Section \ref{sec:alg_memo} develops our connectivity-aware traffic phase scheduling algorithm by taking into account heterogeneous communications. Section \ref{sec:sims} presents the simulation results. Section \ref{sec:conclusion} concludes the paper.

\section{Related work}\label{sec:related}

This work combines ideas from traffic phase scheduling, queuing theory, and network optimization. In this section, we discuss the most relevant literature from these areas.

{\em Traffic phase scheduling:} Design and development of traffic phase scheduling algorithms have a long history; more than 50 years \cite{miller}. Thus, there is huge literature in the area, especially on the design of optimal pre-timed policies \cite{miller, gartner, cascetta}, which activate traffic phases according to a time-periodic pre-defined schedule. These policies do not meet expectations under changing arrival times, which require adaptive control \cite{pitu_ning}. The adaptive control mechanisms including \cite{diakaki}, \cite{gartner}, \cite{henry}, \cite{hunt}, \cite{lowrie} and \cite{mirchandani}, optimize control variables, such as traffic phases, based on traffic measures, and apply them on short term. 

{\em Queueing theory:} Using queuing theory to analyze transportation systems has also very long history  \cite{webster}. \Eg \cite{miller, newell, dirk} considered one-lane queues and calculated the expected queue length and arrivals using probability generation functions. Other modeling strategies are also studied; such as the queuing network model \cite{osorio}, cell transmission model \cite{lo}, store-and-forward \cite{aboudolas}, and petri-nets \cite{febbraro}. 

{\em Network optimization and its applications to transportation systems:} Max-weight scheduling algorithm and backpressure routing and scheduling algorithms \cite{tassiulas} arising from network optimization area has triggered significant research in wireless networks \cite{neely, neely_modiano}. This topic has also inspired research in transportation systems \cite{gregoire, varaiya_13, varaiya, wongpir}. Feedback control algorithms to ensure maximum stability are proposed both under deterministic arrivals \cite{varaiya} and stochastic arrivals \cite{varaiya_09, wongpir} following backpressure idea. The infinite buffer assumption of backpressure framework is studied by capacity aware back-pressure algorithm in \cite{jean}.

{\em Our work in perspective:} As compared to the previous work briefly summarized above, our work focuses on connected vehicles and investigates the scenario where vehicles communicate heterogeneously. In this scenario, we develop an efficient connectivity-aware traffic phase scheduling algorithm by employing expectation and learning of congestion levels at intersections. 

Our previous work \cite{allerton15_tech} investigates the impact of the blocking problem at intersections, characterizes the waiting times, and develops a shortest delay routing algorithm in transportation systems. As compared to this work, in this paper, we develop a connectivity-aware traffic phase scheduling algorithm by taking into account heterogeneous communications.

\vspace{-5pt}
\section{System model}\label{sec:system model}
In this section, we present our system model including traffic lights and phases as well as our queuing models of the traffic. 

{\em Traffic lights and phases:} 
In our system model, we focus on an intersection controlled by a traffic light. The four traffic phases we consider in this paper are shown in Fig.~\ref{fig:phases}. We define $\phi$ as a phase decision, \eg $\phi = 1$ corresponds to Phase I in Fig.~\ref{fig:phases}. The set of phases is $\Phi$, and $\phi \in \Phi$. 

We consider that time is slotted, and at each time slot $t$, a phase decision is made. Each traffic phase lasts for $n$ time slots. 
Vehicles have a chance to pass the intersection only when the corresponding traffic phase is active, \ie ON. For instance, vehicles in the south-north bound lanes may pass the intersection only when phase $\phi = 1$ is ON in Fig.~\ref{fig:phases}. 

{\em Modeling intersections with queues:}
We model the isolated intersection as a set of queues following \cite{allerton15_tech}. Typically, there are four queues for each direction (for south-north, north-south, west-east, and east-west bound) at an intersection. We specifically focus on one direction and model it using two models: \modelI, which is one-lane model shown in Fig.~\ref{queuemodels}(a) and \modelII; which is a one+two lane model shown in Fig.~\ref{queuemodels}(b).

\begin{figure}[t!]
\begin{center}
\subfigure[\modelI]{ \scalebox{0.8}{\includegraphics{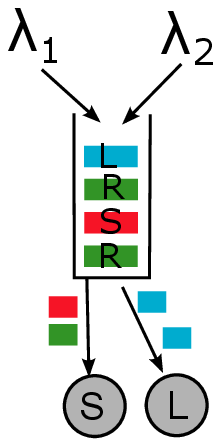}} }\hspace{40pt}
\subfigure[\modelII]{ \scalebox{0.8}{\includegraphics{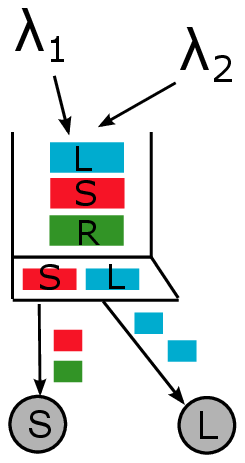}} }
\end{center}
\begin{center}
\vspace{-10pt}
\caption{\label{queuemodels} Two queuing models considered in this paper, where $\lambda_1$ and $\lambda_2$ are the arrival rates of straight-going and left-turning traffic, respectively.(a) Single-lane traffic model. (b) One+two lane model.}
\vspace{-35pt}
\end{center}
\end{figure}

Note that for both of \modelI~ and \modelII, we can consider straight-continuing and right-turning traffic as the same traffic, since they share the similar right of way. Thus, to demonstrate the analysis in a simple way, we simply consider that the right-turning and straight-continuing traffics are combined together, and we call both right-turning and straight-continuing vehicles as straight-going vehicles.

At each slot, vehicles arrive into intersections, where $\lambda_1$ and $\lambda_2$ are the average arrival rates of straight-going and left-turning vehicles, respectively. In our analysis, the arrivals can follow any i.i.d. distribution. In this setup, when a vehicle enters the intersection, it can connect to the traffic light either using cellular or vehicle-to-vehicle communications. Thus, it can communicate its intention with the traffic light about its destination, \ie turning left, going straight, etc. The probability of communication for each vehicle is $\rho$. 

If a vehicle does not communicate, we model their intentions probabilistically, where $p_1$ is the probability that a vehicle (which does not communicate its intention) will go straight, while $p_2$ is the probability that it will turn left.  



\vspace{-5pt}
\section{Connectivity-Aware Traffic Phase Scheduling}\label{sec:alg_memo} 

\subsection{CAMW: Connectivity-Aware Max-Weight}
In this section, we develop our connectivity-aware traffic phase scheduling algorithm by taking into account heterogeneous communications. We consider the setup shown in Fig. \ref{fig:phases} for phases. Our scheduling algorithm, which we call Connectivity-Aware Max-Weight (CAMW), determines the phase $\phi$ by optimizing
\begin{align}\label{eq:sche_obj_memo}
\max_{\boldsymbol \phi} \mbox{ } &  \sum_{i \in \{1, \ldots 4\}} Q_i(t)\tilde{E}(K^{\phi}_i(t)) \nonumber \\
\mbox{s.t.} \mbox{ }  & \phi \in \Phi. 
\end{align} where $Q_i(t)$ is the number of vehicles in the $i$th incoming queue at time slot $t$, and $\tilde{E}(K^{\phi}_i(t))$ is the estimated number of vehicles that can pass the intersection from the $i$th incoming queue under traffic phase $\phi \in \Phi$. Note that one active phase lasts for $n$ time slots and it takes one time slot for a vehicle to pass the intersection. In other words, at most $n$ vehicles in a queue can pass the intersection during one green light phase. The optimization problem in (\ref{eq:sche_obj_memo}) applies to all queuing models (\ie includes both \modelI~ and \modelII~). 

Note that (\ref{eq:sche_obj_memo}) determines the phase by taking into account $Q_i(t)$ and $\tilde{E}(K^{\phi}_i(t))$. The queue size information $Q_i(t)$ can be easily determined by traffic lights using sensors that count the number of approaching vehicles. In other words, (\ref{eq:sche_obj_memo}) prioritizes phases with larger $Q_i(t)$ values. This is an approach followed by the classical max-weight algorithm. However, as we discussed earlier, using $Q_i(t)$ alone is not sufficient when vehicles heterogeneously communicate with traffic lights. In this case, since each device has different destinations, blocking can occur. \Ie even if $Q_i(t)$ is large, the number of vehicles that can pass through the intersection could be small due to blocking. Thus, to reflect this fact, we include the term $\tilde{E}(K^{\phi}_i(t))$ in the optimization problem. 

$\tilde{E}(K^{\phi}_i(t))$ is the estimated number of vehicles that can pass the intersection from the $i$th incoming queue under traffic phase $\phi \in \Phi$. $\tilde{E}(K^{\phi}_i(t))$ is found using two steps: expectation and learning. The key idea behind expectation part is to calculate the expected number of vehicles, which is ${E}(K^{\phi}_i(t))$, that can pass the intersection at phase $\phi$, while the key idea of the learning part is to fine tune ${E}(K^{\phi}_i(t))$ and find  $\tilde{E}(K^{\phi}_i(t))$  by learning the directions of vehicles that do not communicate. In the next two sections, we present the expectation and learning components of CAMW.

\begin{figure} [t!]
\centering
\scalebox{1.0}{\includegraphics{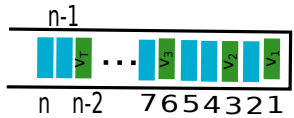}}
\vspace{-5pt}
\caption{\label{fig:communicatingvehiclesinthequeue} An illustrative example of communicating vehicles in a queue at a time slot. Communicating vehicles are at labeled locations; $v_1, v_2,\cdots, v_T$.}
\vspace{-15pt}
\end{figure}

\subsection{Expectation}\label{sec:expectation}
\subsubsection{Calculation of $E(K^{\phi}_i(t))$ for \modelI}\label{memo_m1_analysis}
Let us focus on phase $\phi \in \Phi$ and the $i$th queue, where $i \in \{1,2,3,4\}$. In this setup, $T(t)$ ($T(t)\leq n$) denotes the number of vehicles that have communication abilities at time slot $t$, and $v_l(t)$ ($l=1,2,\cdots,T$) denotes the location of the $l$th communicating vehicle in the queue. For example, $v_2(t)=3$ means that the second communicating vehicle in the queue is actually the third vehicle in the queue.  Fig.~\ref{fig:communicatingvehiclesinthequeue} illustrates an example locations of communicating vehicles. Note that the vehicles that do not communicate are not assigned any location labels. 

Now, let us define two conditions; $C_1$ and $C_2$. The first condition $C_1$ requires that all communicating vehicles would like to go to the same direction and aligned with the traffic phase, while the second condition $C_2$  corresponds to the case that the first communicating vehicle that is not aligned with the traffic phase is in the location of $v_L(t)$ ($L=1,2,\cdots,T$). Note that the conditions $C_1$ and $C_2$ are complementary. The next theorem characterizes the expected number of vehicles that would leave queue $i$ at phase $\phi$. 

\begin{theorem} 
Assume that all the queues  in an intersection follow \modelI. The expected number of vehicles that would leave the $i$th  queue and pass the intersection at traffic phase $\phi \in \Phi$ is characterized by
\begin{align}\label{eqn:final_compact_k(t)}
E(K^{\phi}_i(t)) = \begin{cases} \sum_{l=0}^{T(t)}\frac{p^{1-l}_1}{p_2}((p_1+p_2v_l(t))p^{v_l(t)-1}_1\nonumber\\
 + (1-2p_1-p_2v_{l+1}(t))p^{v_{l+1}(t)-2}_1)\nonumber\\
+np^{n-T(t)}_1, & \mbox{    {\em if} $C_1$  {\em holds}  }\\
\\
\sum_{l=0}^{L-1}\frac{p^{1-l}_1}{p_2}((p_1+p_2v_l(t))p^{v_l(t)-1}_1\nonumber \\
 + (1-2p_1-p_2v_{l+1}(t))p^{v_{l+1}(t)-2}_1)\nonumber \\
 +(v_L(t)-1)p^{v_L(t)-L}_1,   & \mbox{    {\em if} $C_2$  {\em holds.}  }
\end{cases} 
\\
\end{align}
\end{theorem}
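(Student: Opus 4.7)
The plan is to reduce $K^\phi_i(t)$ to a first-failure index and then evaluate a gap-by-gap arithmetico-geometric sum. Fix phase $\phi$ and queue $i$ and restrict attention to the first $n$ queue positions, since at most $n$ vehicles can clear during a green cycle. Because blocking propagates forward, a vehicle passes iff every vehicle ahead of it is compatible with $\phi$; writing $M$ for the position of the first incompatible vehicle (and $M=n+1$ if none of the first $n$ is), one has $K^\phi_i(t)=\min(M-1,n)$. Conditioning on the communicating locations $v_1(t)<\cdots<v_T(t)$ and their announced directions, the remaining $n-T(t)$ slots carry i.i.d.\ non-communicating vehicles, each compatible with $\phi$ with probability $p_1$ and incompatible with probability $p_2=1-p_1$; let $u_1<\cdots<u_{n-T(t)}$ enumerate their positions in order.

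Under $C_1$, only non-communicating vehicles can fail, so the $k$-th such vehicle is the first to fail with probability $p_1^{k-1}p_2$ and no failure occurs with probability $p_1^{n-T(t)}$, giving
\[
E[K^\phi_i(t)\mid C_1] \;=\; \sum_{k=1}^{n-T(t)}(u_k-1)\,p_1^{k-1}p_2 \;+\; n\,p_1^{n-T(t)}.
\]
Under $C_2$, the first communicating failure is at $v_L(t)$, and $\{1,\ldots,v_L-1\}$ contains exactly $v_L-L$ non-communicating positions, so
\[
E[K^\phi_i(t)\mid C_2] \;=\; \sum_{k=1}^{v_L(t)-L}(u_k-1)\,p_1^{k-1}p_2 \;+\; (v_L(t)-1)\,p_1^{v_L(t)-L}.
\]
To match the claimed closed form, I would adopt the conventions $v_0(t)=0$ and, in case $C_1$, $v_{T+1}(t)=n+1$, call the positions $v_l+1,\ldots,v_{l+1}-1$ \emph{gap} $l$, and use that the $k$-th non-communicating vehicle in gap $l$ sits at $u_k=k+l$. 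Substituting $m=k+l-1$ then rewrites each sum as $\sum_{l}\sum_{m=v_l}^{v_{l+1}-2} m\,p_1^{m-l}\,p_2$, with $l$ running over $0,\ldots,T$ for $C_1$ and $0,\ldots,L-1$ for $C_2$. The inner sum evaluates via the standard identity $(1-p_1)\sum_{m=a}^{b} m\,p_1^m = a\,p_1^a - b\,p_1^{b+1} + \frac{p_1^{a+1}-p_1^{b+1}}{1-p_1}$.

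The main obstacle is the algebraic bookkeeping. One must verify that, after applying the identity above with $a=v_l$ and $b=v_{l+1}-2$ and factoring out $p_1^{-l}/p_2$, the per-gap expression collapses (using $1-2p_1=p_2-p_1$) to exactly the summand $\frac{p_1^{1-l}}{p_2}\bigl((p_1+p_2 v_l)p_1^{v_l-1}+(1-2p_1-p_2 v_{l+1})p_1^{v_{l+1}-2}\bigr)$ claimed in the theorem. In case $C_1$ one must further check that plugging $v_{T+1}=n+1$ into the $l=T$ summand combines cleanly with the trailing $n\,p_1^{n-T(t)}$ correction (the ``no failure'' probability), and that the $l=0$ summand with $v_0=0$ correctly accounts for the initial gap $\{1,\ldots,v_1-1\}$; analogous boundary checks at $l=0$ and $l=L-1$ are needed in case $C_2$ to absorb the $(v_L-1)p_1^{v_L-L}$ residual term.
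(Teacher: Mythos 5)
Your proposal is correct and follows essentially the same route as the paper's Appendix~\ref{proof:k(t)_m1}: it partitions the queue into blocks delimited by the communicating vehicles, derives the piecewise-geometric law of the number of passing vehicles (your conditioning on the rank $k$ of the first failing non-communicating vehicle is just the reparametrization $j=u_k-1$ of the paper's direct computation of $P[J=j]$), truncates at $v_L(t)-1$ under $C_2$, and evaluates the same blockwise arithmetico-geometric sums via the same derivative identity. The only cosmetic difference is your convention $v_0=0$ versus the paper's $v_0=1$, which yields the identical $l=0$ summand since the $m=0$ term vanishes.
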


\begin{proof}
The proof is provided in Appendix \ref{proof:k(t)_m1}. \hfill 
\end{proof}

\vspace{-5pt}
\subsubsection{Calculation of $E(K^{\phi}_i(t))$ for \modelII \label{memo_m2_analysis} }
\modelII~ assumes that there are dedicated lanes for left-turning and straight-going vehicles, which makes it fundamentally different than \modelI. In this setup, we consider that traffic lights can sense whether the HoL location of each dedicated lane is empty or not. Thus, in \modelII, the first two vehicles in the queue will indirectly communicate their intentions to the traffic light. Fig. \ref{fig:holconfiguration} demonstrates four possible configurations for HoL vehicles. For example, in Fig. \ref{fig:holconfiguration}(a), HoL position of the straight going lane is empty (shown with {\em E}), the traffic light will know that two vehicles in the queue will turn left. On the other hand, in Fig. \ref{fig:holconfiguration}(b), the traffic light knows that in the dedicated lanes, one vehicle will go straight, and the other will turn left, but it does not know the intentions of the other vehicles as long as they do not explicitly communicate with the traffic light. 

\begin{figure}[t!]
\begin{center}
\subfigure[Conf. I]{ \scalebox{1.0}{\includegraphics{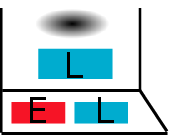}} } \hspace{20pt}
\subfigure[Conf. II]{ \scalebox{1.0}{\includegraphics{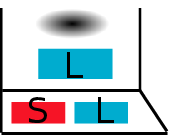}} } \\
\subfigure[Conf. III]{ \scalebox{1.0}{\includegraphics{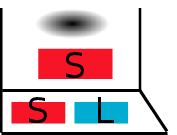}} }  \hspace{20pt}
\subfigure[Conf. IV]{ \scalebox{1.0}{\includegraphics{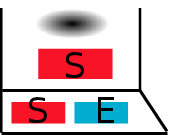}} }
\end{center}
\begin{center}
\vspace{-10pt}
\caption{\label{fig:holconfiguration} Four possible configurations (Conf. I to Conf IV) for the first three vehicles in \modelII, where $L$ and $S$ denote that the intention of the vehicle is to turn left or go straight, respectively, while $E$ denotes that the location is empty (due to previous blocking).}
\vspace{-25pt}
\end{center}
\end{figure}

The crucial observation with \modelII~ is that if the vehicles that indirectly communicate with the traffic light are separated from the queue, the rest of the vehicles form a {\em sub-queue}. For example, all the vehicles other than (i) the first two left-turning vehicles in Fig.~\ref{fig:holconfiguration}(a), and (ii) the two vehicles that are going straight and turning left  in Fig.~\ref{fig:holconfiguration}(b), form  a {\em sub-queue}. The important property of the {\em sub-queue} is that it follows \modelI, and can be modeled using the location labels as shown in Fig. \ref{fig:communicatingvehiclesinthequeue}. Thus, we can calculate $E(K^{\phi}_i(t))$ of \modelII~ using the similar analysis we have in Section~\ref{memo_m1_analysis}. Next, we provide the details of our $E(K^{\phi}_i(t))$ calculation. 

Let $T(t)$ denotes the number of communicating vehicles in the {\em sub-queue} at time $t$, $C_3$ is the condition that all communicating vehicles in the {\em sub-queue} go to the same direction aligned with the traffic phase, and $C_4$ denotes the condition that the first communicating vehicle in the {\em sub-queue} that goes to a different direction than what the traffic phase allows is at location $v_L(t)$ ($L=1,2,\cdots,T$). The next theorem characterizes the expected number of vehicles that would leave queue $i$ at phase $\phi$ for model \modelII. 

\begin{theorem}
Assume that all the queues  in an intersection follow \modelII. Then, if the first three vehicles of the $i$th incoming queue are in the form of Fig. \ref{fig:holconfiguration}(a) or Fig. \ref{fig:holconfiguration}(d), the expected number of transmittable vehicles is characterized by 
\begin{align} \label{eqn:final_compact_k(t)_m2}
E(K^{\phi}_i(t)) = \begin{cases} 2+\sum_{l=0}^{T(t)}\frac{p^{1-l}_1}{p_2}((p_1+p_2v_l(t))\nonumber \\
p^{v_l(t)-1}_1+ (1-2p_1-p_2v_{l+1}(t)) \nonumber \\
p^{v_{l+1}(t)-2}_1)+(n-2)p^{n-2-T(t)}_1, & \mbox{ {\em if} $C_3$ {\em holds} } \\
\\
2+\sum_{l=0}^{L-1}\frac{p^{1-l}_1}{p_2}((p_1+p_2v_l(t))\nonumber \\
p^{v_l(t)-1}_1+ (1-2p_1-p_2v_{l+1}(t)) \nonumber \\
p^{v_{l+1}(t)-2}_1)+(v_L(t)-1)p^{v_L(t)-L}_1, & \mbox{{\em if} $C_4$ {\em holds} } 
\end{cases}  
\\
\end{align} where $T(t) \leq n-2$.

And if the first three vehicles of the $i$th incoming queue are in the form of Fig. \ref{fig:holconfiguration}(b) or Fig. \ref{fig:holconfiguration}(c), the expected number of transmittable vehicles is characterized by
\begin{align} \label{eqn:final_compact_k(t)_m2_2}
E(K^{\phi}_i(t)) = \begin{cases} 1+\sum_{l=0}^{T(t)}\frac{p^{1-l}_1}{p_2}((p_1+p_2v_l(t)) \nonumber \\
p^{v_l(t)-1}_1+ (1-2p_1-p_2v_{l+1}(t)) \nonumber \\
p^{v_{l+1}(t)-2}_1)+(n-1)p^{n-1-T(t)}_1, & \mbox{{\em if} $C_3$ {\em holds}}\\
\\
1+\sum_{l=0}^{L-1}\frac{p^{1-l}_1}{p_2}((p_1+p_2v_l(t))\nonumber \\
p^{v_l(t)-1}_1+ (1-2p_1-p_2v_{l+1}(t))\nonumber \\
p^{v_{l+1}(t)-2}_1)+(v_L(t)-1)p^{v_L(t)-L}_1, & \mbox{{\em if} $C_4$ {\em holds}}
\end{cases} 
\\
\end{align} 
where $T(t) \leq n-1$.
\end{theorem}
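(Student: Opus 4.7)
The plan is to reduce Theorem 2 to Theorem 1 by exploiting the structural feature that \modelII~ reveals the intention of the first one or two vehicles through the lane-occupancy pattern at the head of the queue. I would first decompose each incoming queue $i$ into a ``front'' of at most two vehicles (those occupying the dedicated left-turn and straight-go slots shown in Fig.~\ref{fig:holconfiguration}) and a single-lane ``sub-queue'' formed by all vehicles behind the front. As noted in the paragraph preceding the theorem, this sub-queue is itself a \modelI~instance, so Theorem 1 applies to it directly, provided one uses the correct effective capacity and reindexes the communicating vehicles $v_1(t),\ldots,v_{T(t)}(t)$ relative to the first position of the sub-queue.

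I would then handle the two configuration classes separately by case analysis on Fig.~\ref{fig:holconfiguration}. For configurations (a) and (d), the front consists of two vehicles whose directions are both aligned with $\phi$ (two left-turners when the active phase allows left turns, or two straight-goers when the active phase allows straight traffic). Both clear the intersection in the first two time slots, contributing $+2$ to the expected count and reducing the time available to the sub-queue from $n$ to $n-2$. Invoking Theorem 1 on the sub-queue with $C_3$, $C_4$ playing the roles of $C_1$, $C_2$ and with capacity $n-2$ reproduces exactly (\ref{eqn:final_compact_k(t)_m2}): the tail term $np_1^{n-T(t)}$ of Theorem 1 becomes $(n-2)p_1^{n-2-T(t)}$, while the sums over $l$ carry over unchanged in form because the direction probabilities in the sub-queue are still $p_1$ and $p_2$.

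For configurations (b) and (c), the front has one left-turner and one straight-goer, so exactly one of them is aligned with $\phi$ and passes while the other remains in its own dedicated lane without blocking the direction served by $\phi$. This contributes $+1$ to the expected count and reduces the time available to the sub-queue from $n$ to $n-1$. The same invocation of Theorem 1 applied to the sub-queue, this time with capacity $n-1$, yields (\ref{eqn:final_compact_k(t)_m2_2}), with tail term $(n-1)p_1^{n-1-T(t)}$ and additive constant $1$ in place of $2$.

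The main obstacle will be justifying that the sub-queue genuinely inherits the probabilistic model assumed in Theorem 1: I need to check that vehicles behind the front still have i.i.d.\ directions with probabilities $p_1$ and $p_2$, that communication (with rate $\rho$) is independent of a vehicle's position so that $v_1(t),\ldots,v_{T(t)}(t)$ are distributed as required by Theorem 1 after the relabeling, and that the internal blocking argument used in the proof of Theorem 1 transports to the sub-queue without modification (since the front vehicles, once served, do not interact with the tail). Once this reduction is in place, the two formulas follow by direct substitution and no further combinatorial work is needed.
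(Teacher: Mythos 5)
Your proposal matches the paper's own proof: the paper likewise splits on the head configurations, credits $+2$ for Fig.~\ref{fig:holconfiguration}(a)/(d) and $+1$ for (b)/(c), observes that the remaining $n-2$ (resp.\ $n-1$) vehicles form a \emph{sub-queue} governed by \modelI, and obtains (\ref{eqn:final_compact_k(t)_m2}) and (\ref{eqn:final_compact_k(t)_m2_2}) by adding the constant to (\ref{eqn:final_compact_k(t)}). Your extra care about verifying that the sub-queue inherits the i.i.d.\ direction and communication model is a point the paper takes for granted, but it does not change the argument.
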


\begin{proof}
The number of vehicles that can be guaranteed to pass the intersection under certain traffic phase depends on the configuration of the first three vehicles in the queue. First, we consider the case that the first three vehicles are in the form of Fig. \ref{fig:holconfiguration}(a) or Fig. \ref{fig:holconfiguration}(d). In this case, at least two vehicles can pass the intersection for the corresponding traffic phase, so we need to consider the rest of the vehicles, \ie $n-2$ vehicles assuming that $n$ is the queue size. Noting that $n-2$ vehicles form a {\em sub-queue} in this setup, and assuming that  $T(t)$ ($T(t)\leq n-2$) vehicles communicate the {\em sub-queue}, it is clear that the {\em sub-queue} is represented by \modelI. Thus,  (\ref{eqn:final_compact_k(t)_m2}) is obtained by adding two to (\ref{eqn:final_compact_k(t)}). 

On the other hand, if the first three vehicles are in the form of Fig. \ref{fig:holconfiguration}(b) or Fig. \ref{fig:holconfiguration}(c), at least one vehicle can pass the intersection at any traffic phase configuration. In this scenario, one vehicle is considered as guaranteed to be transmitted, and the rest of the vehicles ($n-1$ vehicles) form a {\em sub-queue}. Similar to above discussion, the {\em sub-queue} follows \modelI, so (\ref{eqn:final_compact_k(t)_m2_2}) is obtained by adding one to  (\ref{eqn:final_compact_k(t)}). This concludes the proof.  \hfill
\end{proof}

~~~

\subsection{Learning}\label{sec:learn}
In the previous section, we characterized the expected number of vehicles $E(K^{\phi}_i(t))$ that can pass an intersection at phase $\phi$ from queue $i$. However, in our CAMW algorithm, which solves (\ref{eq:sche_obj_memo}), we do not use $E(K^{\phi}_i(t))$. The reason is that $E(K^{\phi}_i(t))$ is an expected value and its granularity is poor. In other words, if we use $E(K^{\phi}_i(t))$ in (\ref{eq:sche_obj_memo}), we may end up with choosing a traffic phase that allows no vehicles passing the intersection. In this case, the intersection is {\em blocked}. More importantly, once the intersection is {\em blocked}, if we keep using  $E(K^{\phi}_i(t))$ in (\ref{eq:sche_obj_memo}), we may end up with choosing the wrong traffic phase next time with high probability, which leads to a deadlock. To address this issue, we introduce the learning mechanism, which works in the following way.

We assume that traffic lights can infer if blocking occurs at intersections, and use this information in future decisions. For example, assume that  the selected traffic phase at time $t-1$ is $\phi = 1$ (as shown in Fig. \ref{fig:phases}(a)), and $\tilde{E}(K^{\phi=1}_i(t-1)) = E(K^{\phi=1}_i(t-1))$. If blocking occurs, then the traffic light can learn that both of the HoL vehicles in south-north bound queues must go left. Using this information, $\tilde{E}(K^{\phi=1}_i(t))$ is set to zero  at time $t$ so that $\phi=1$  is not selected again. $\tilde{E}(K^{\phi=1}_i(t+\Delta))$ is set to ${E}(K^{\phi=1}_i(t+\Delta))$ again immediately after some vehicles are transmitted from the queues. This may take $\Delta$ time slots. This learning mechanism applies to both \modelI~ and \modelII, but in \modelII, separate lanes for each direction makes the learning process by default. \Ie in \modelII, $\tilde{E}(K^{\phi}_i(t)) = E(K^{\phi}_i(t))$, $\forall t$. 

%
%
%
%


\vspace{-5pt}
\section{Performance Evaluation}\label{sec:sims}
In this section, we consider an intersection controlled by a traffic light. Each arriving vehicle to the intersection can communicate with probability $\rho$. Each green phase lasts for one or more time slots. We assume that the arrival rate to each queue in the intersection is the same; \ie $\lambda_1$ and $\lambda_2$ are the same $\forall i \in \{1,2,3,4\}$. We present the simulation results of our Connectivity-Aware Max-Weight (CAMW) algorithm for both of \modelI~ and \modelII, as compared to the baseline, the max-weight algorithm, which is briefly described next.

\vspace{-5pt}
\subsection{The baseline: max-weight algorithm}
The max-weight scheduling algorithm determines a traffic phase as a solution to
\begin{align}\label{eq:sche_obj_max_weight} 
\max_{\boldsymbol \rho} \mbox{ } &  \sum_{i=1}^{4} Q_i(t)K^{\phi}_i(t) \nonumber \\
\mbox{s.t.} \mbox{ }  & \phi \in \Phi,
\end{align} where $K^{\phi}_i(t)$  is the weight of queue $i$ for phase $\phi$.\footnote{Note that $K^{\phi}_i(t) = 1$ in the original max-weight algorithm, while it varies in (\ref{eq:sche_obj_max_weight}) as explained in this section. Thus, although we call this baseline the max-weight algorithm, it is actually the improved version of the classical max-weight algorithm.} The value of $K^{\phi}_i(t)$ depends on the intersection type and the corresponding queuing models, which is explained next.

First, let us consider \modelI. If the HoL vehicle in the $i$th queue can communicate, then $K^{\phi}_i(t)=1$ for the phase that is aligned with the direction of the HoL vehicle and $K^{\phi}_i(t)=0$ for the other three phases. If the HoL vehicle cannot communicate, max-weight considers $K^{\phi}_i(t)=1$ for the phases that control the $i$th queue if the queue length is larger than zero. 

Second, we assume that all the queues in the intersection follow \modelII. In this setup, we take into account the first two vehicles in the dedicated lanes. For example, if the first two vehicles from the $i$th incoming queue are in the form of Fig. \ref{fig:holconfiguration}(a), then $K^{\phi}_i(t)=1$ for the left turning phase, and $K^{\phi}_i(t)=0$ for the other phases. On the other hand, if the first two vehicles are in the form of Fig. \ref{fig:holconfiguration}(b), then $K^{\phi}_i(t)=1$ for both the left-turning and straight-going phases.

\subsection{Evaluation of CAMW for \modelI}
\begin{figure}[!t]
\begin{center}
\subfigure[Max-weight]{ \scalebox{.243}{\includegraphics{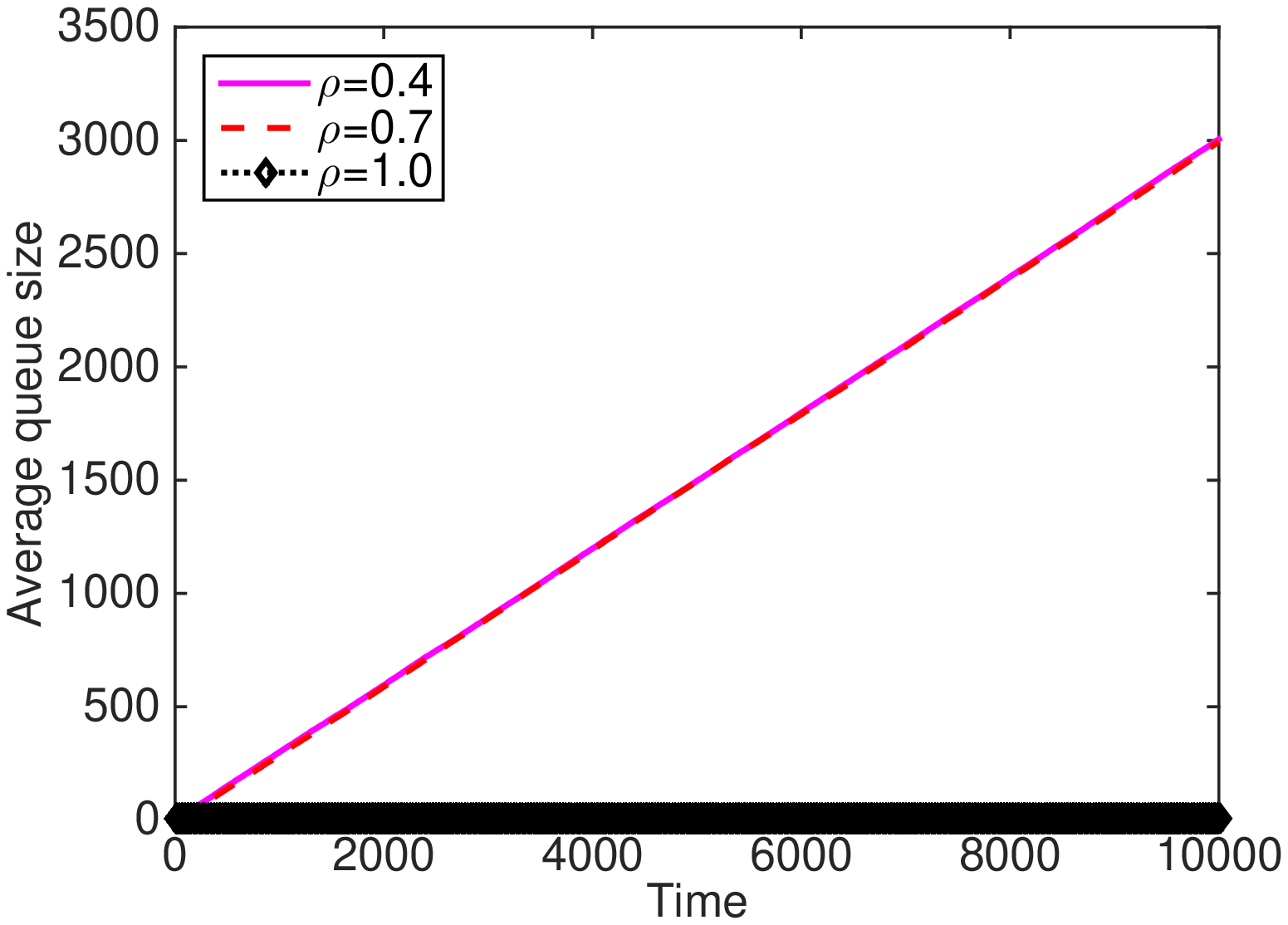}} } 
\subfigure[CAMW]{ \scalebox{.28}{\includegraphics{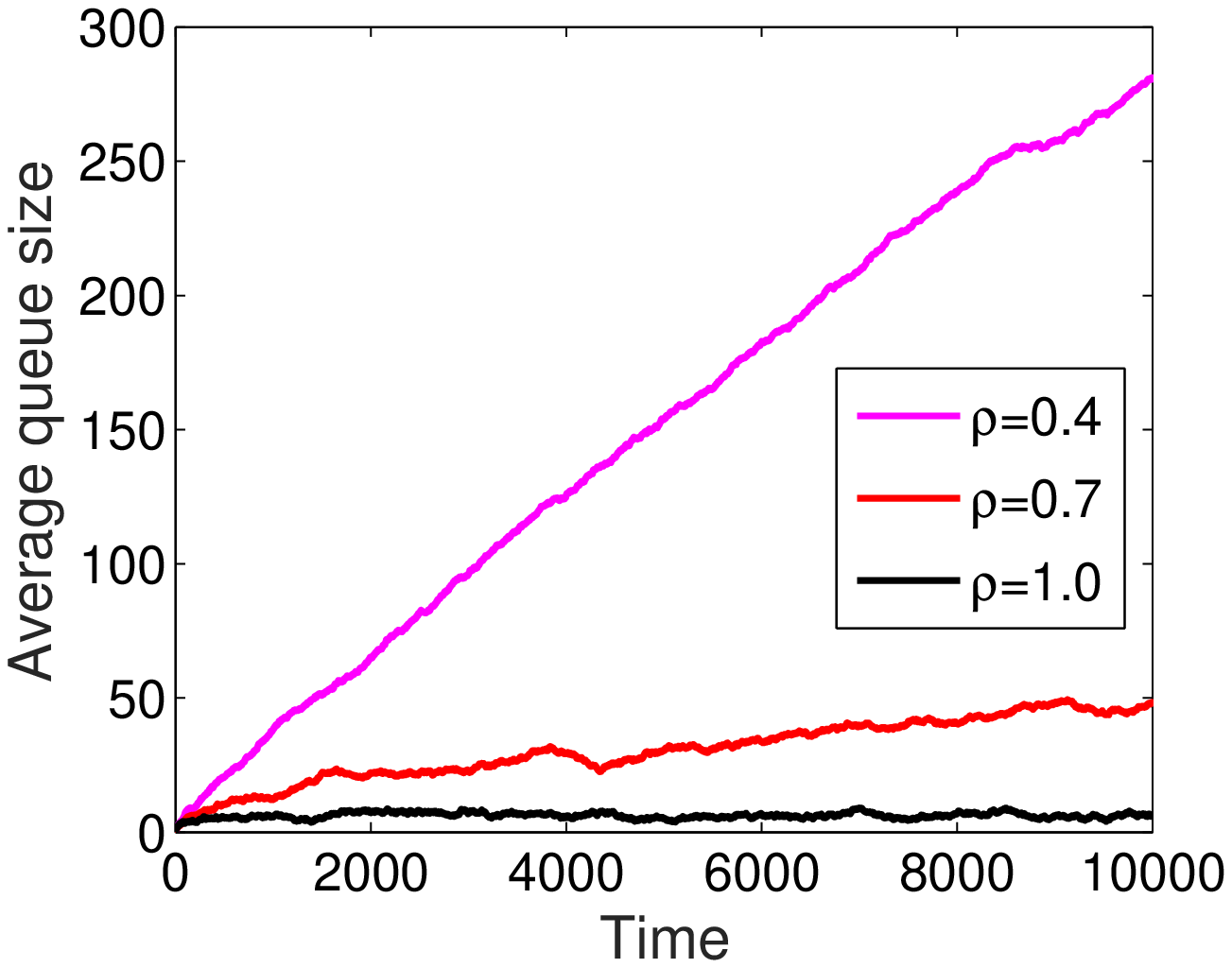}} } \\
\end{center}
\begin{center}
\vspace{-20pt}
\caption{\label{fig:queuesize_time} The average queue size versus time for \modelI~. Each green phase lasts for two time slots. The arrival rate is $\lambda_1=0.18$ and $\lambda_2=0.12$ to each of the queue in the intersection.}
\vspace{-35pt}
\end{center}
\end{figure}

We first assume all the queues in the intersection follow \modelI, and evaluate our CAMW algorithm as compared to the baseline; max-weight. The evolution of the average queue size of the intersection for different scheduling algorithms is presented in Fig. \ref{fig:queuesize_time}. Each green phase lasts for two time slots. The arrival rate is $\lambda_1=0.18$ and $\lambda_2=0.12$ to each of the queue in the intersection. It can be observed that when the communication probability is $\rho=1.0$, both of the algorithms have the similar performance. This is because every vehicle can communicate, so the max-weight algorithm, since the traffic light can communicate with the HoL vehicle, can align the phases with the direction of HoL vehicle. However, when the communication probability reduces to $\rho = 0.7$, max-weight cannot stabilize the queues, while CAMW stabilizes. When $\rho = 0.4$, neither CAMW nor max-weight can stabilize the queues, because the arrival rates fall out of the stability region. As can be seen CAMW supports higher traffic rates than the max-weight algorithm thanks to exploiting connectivity of vehicles. 

\begin{figure}[!t]
\begin{center}
\subfigure[$\rho=0.1$]{ \scalebox{.30}{\includegraphics{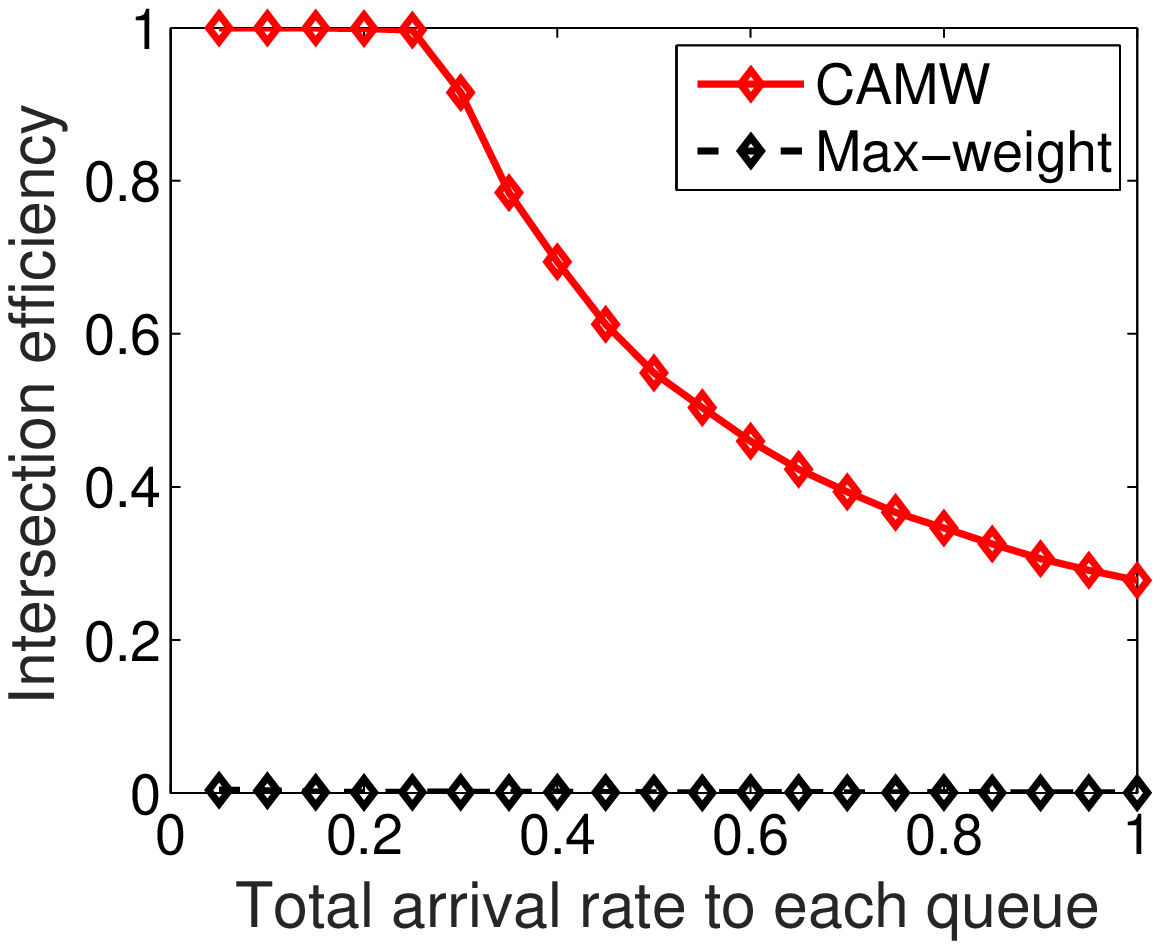}} } 
\subfigure[$\rho=0.4$]{ \scalebox{.30}{\includegraphics{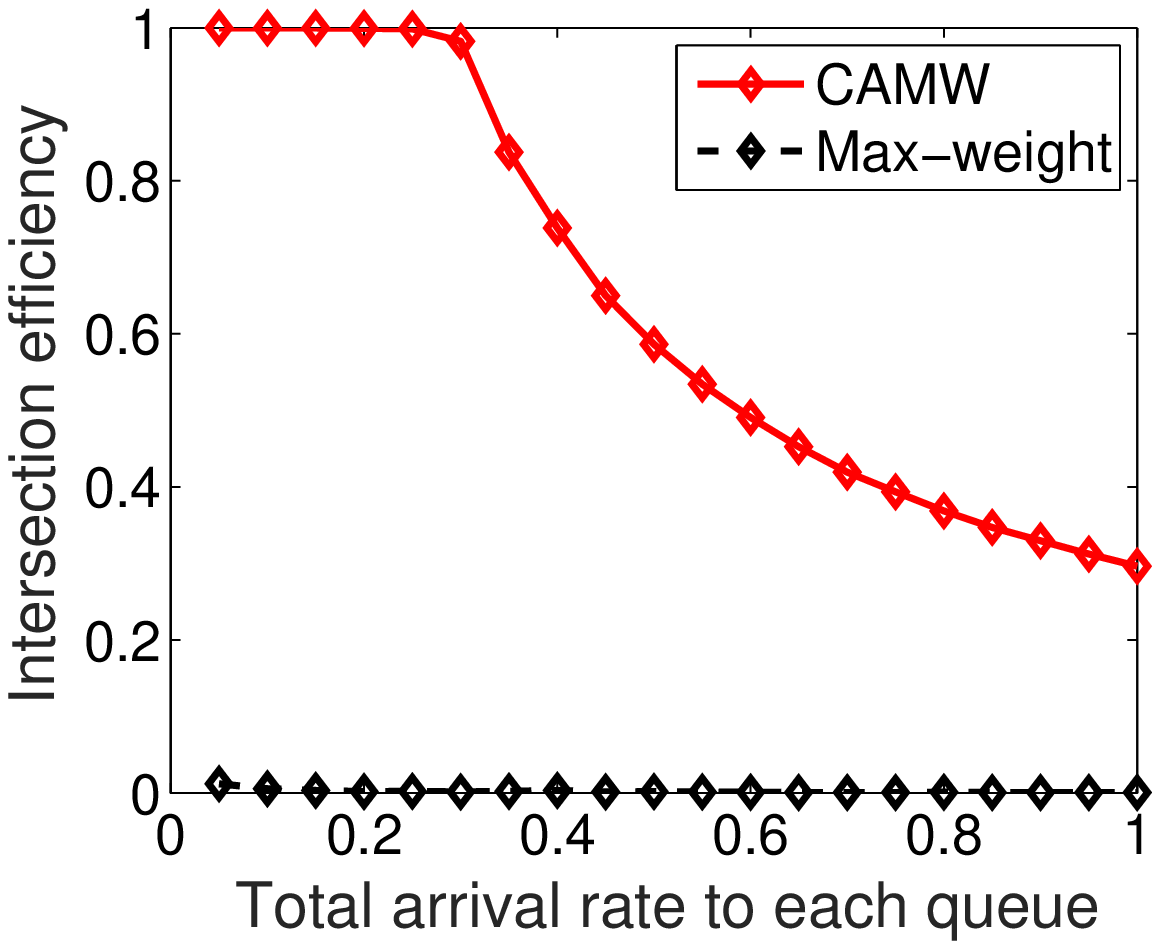}} } \\
\subfigure[$\rho=0.7$]{ \scalebox{.30}{\includegraphics{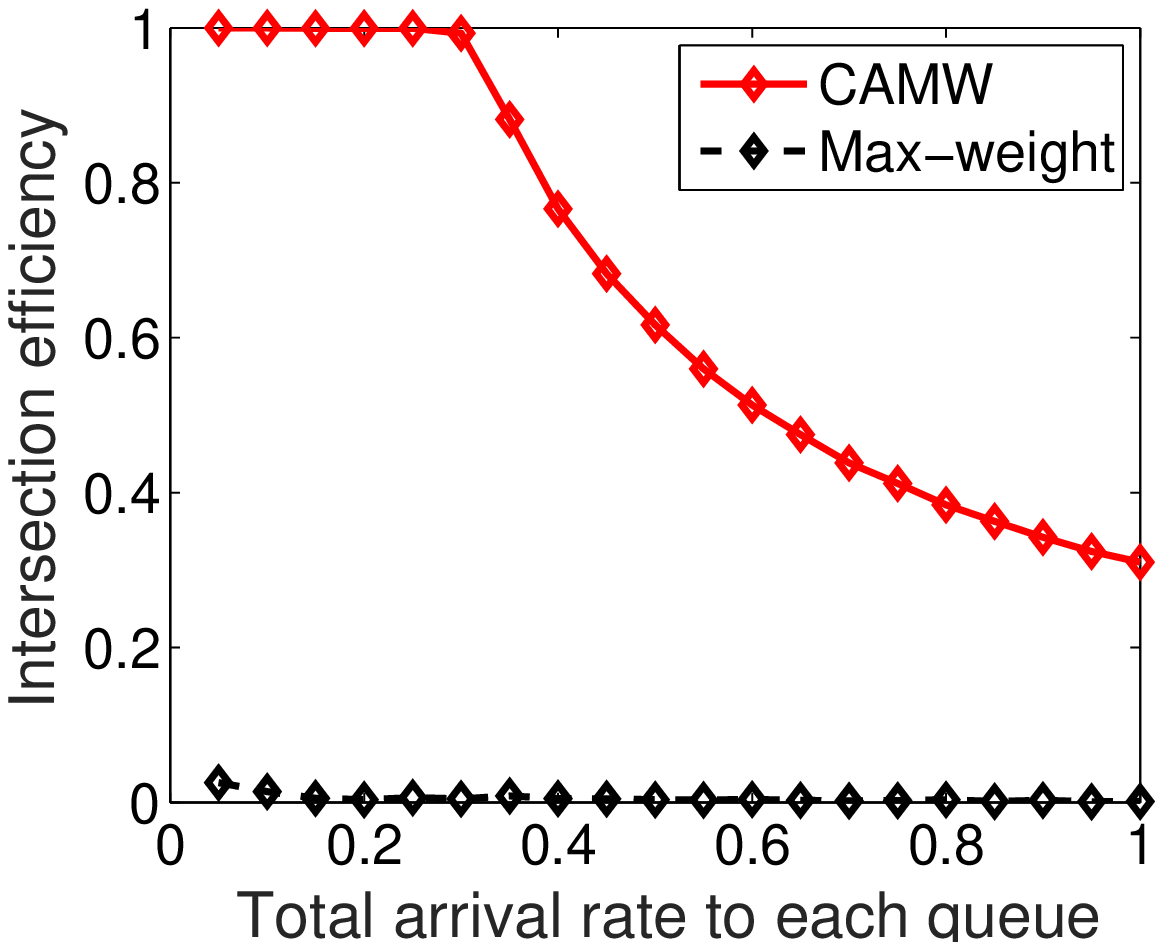}} }  
\subfigure[$\rho=1.0$]{ \scalebox{.30}{\includegraphics{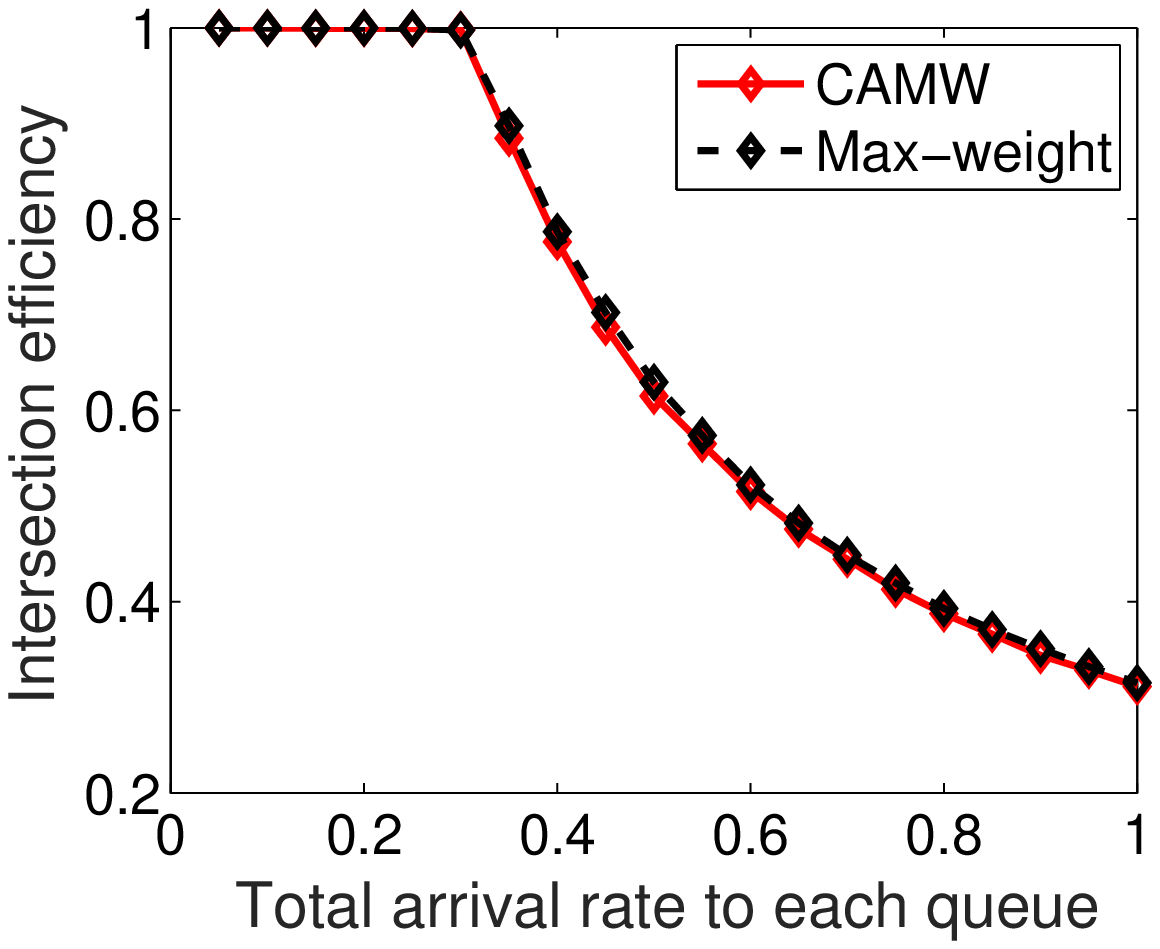}} }
\end{center}
\begin{center}
\vspace{-20pt}
\caption{\label{fig:throuputs} Intersection efficiency versus total arrival rate to each queue with different communication probability $\rho$ for \modelI~. Each green phase lasts for two time slots and each queue has the same arrival rate and $\lambda_1=1.5\lambda_2$.}
\vspace{-30pt}
\end{center}
\end{figure}

Fig. \ref{fig:throuputs} presents the intersection efficiency versus total arrival rate to each queue for different communication probability $\rho$. The intersection efficiency is defined as the ratio of departing traffic to arrival traffic. In this setup, each green phase lasts for two time slots. Each queue has the same arrival rate, and $\lambda_1=1.5\lambda_2$. It can be observed that when $\rho=1.0$, both of the algorithms can achieve very similar intersection efficiency. However, if $\rho \neq 1$, the intersection efficiency of max-weight scheduling algorithm drops almost to zero, while CAMW can still achieve satisfying intersection efficiency thanks to taking into account heterogeneous communication probabilities. 

%

\subsection{Evaluation of CAMW for \modelII}
In this section, we assume all the queues in the intersection follow \modelII. The evolution of the average queue size of the intersection using CAMW and max-weight algorithm for different communication probability $\rho$ is presented in Fig. \ref{fig:2hol_queuesize_evol}. The arrival rate to each queue is $\lambda_1=\lambda_2=0.2$ and each green phase lasts for two time slots. It can be observed from Fig. \ref{fig:2hol_queuesize_evol}(a) that when communicating probability $\rho$ is small, CAMW is slightly better than the max-weight algorithm, which is because both of the two algorithm select traffic phases in a similar way when $\rho$ is small. The average queue sizes over 10,000 time slots  when $\rho=0.1$ using max-weight and CAMW are 10.6601 and 9.0236, respectively. It can be observed from Fig. \ref{fig:2hol_queuesize_evol}(b) that when communicating probability $\rho$ is large, our algorithm improves much over max-weight algorithm. This is because the estimation accuracy in our algorithm improves as $\rho$ increases, which allows more vehicles to pass at each green phase. When $\rho=0.9$, the average queue size over 10,000 time slots using max-weight and CAMW is 10.6601 and 4.3873, respectively. Note that CAMW performs better than max-weight when $\rho$ increases in \modelII, which is against the observation we had in \modelI. The reason is that while $\rho$ affects max-weight's decision about HoL vehicles as explained in (\ref{eq:sche_obj_max_weight}) in \modelI, it does not have any effect in \modelII. 


\begin{figure}[t!]
\begin{center}
\subfigure[$\rho=0.1$]{ \scalebox{.30}{\includegraphics{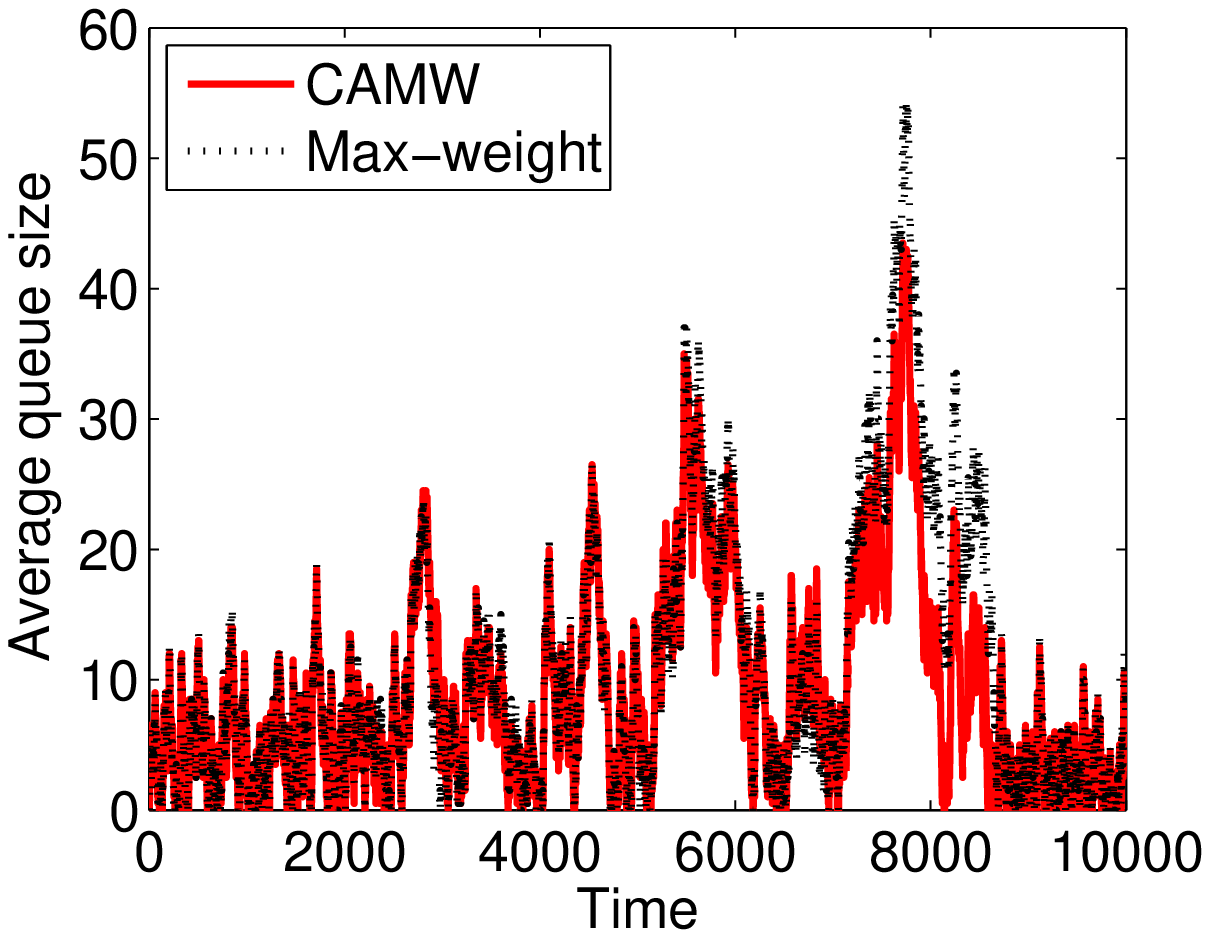}} } 
\subfigure[$\rho=0.9$]{ \scalebox{.30}{\includegraphics{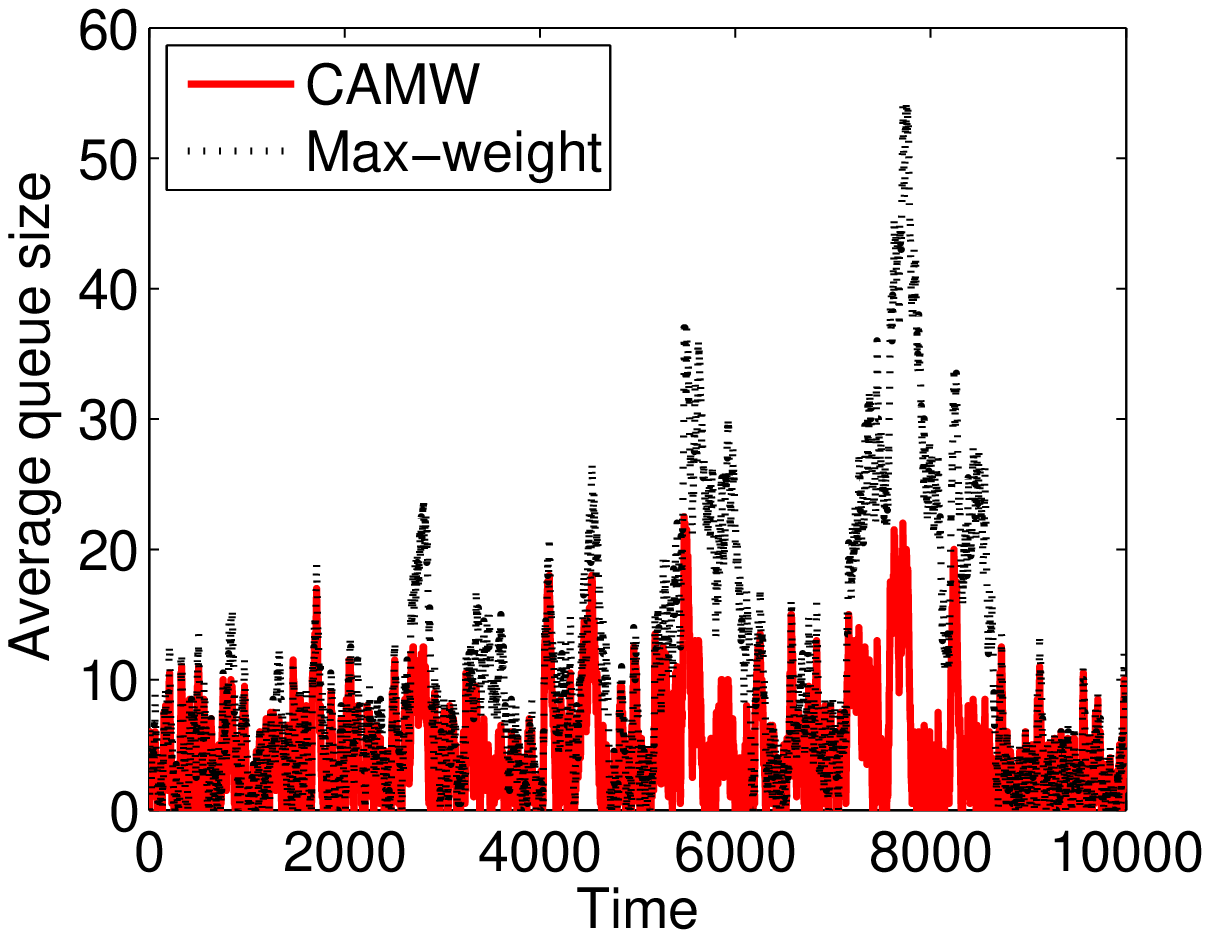}} } \\
\end{center}
\begin{center}
\vspace{-20pt}
\caption{\label{fig:2hol_queuesize_evol} The evolution of the average queue size of the intersection using our algorithm and max-weight algorithm for different communication probability $\rho$ for \modelII. The arrival rate to each queue is $\lambda_1=\lambda_2=0.2$ and each green phase lasts for two time slots.}
\vspace{-30pt}
\end{center}
\end{figure}

Fig. \ref{fig:2hol_thro_vs_arrival} presents the intersection efficiency versus total arrival rate to each queue for different communication probabilities $\rho$. Each queue has the same total arrival rate and $\lambda_1=\lambda_2$. Each green phase lasts for two time slots. It can be observed that the performance of our algorithm improves as the communicating probability $\rho$ increases, while max-weight has the same performance as $\rho$ changes. The reason is that the estimation accuracy in our algorithm improves as $\rho$ increases, so CAMW performs better than the max-weight algorithm as $\rho$ increases. Note that CAMW improves over max-weight by $14\%$, which is significant. 

\begin{figure}[t!]
\begin{center}
\subfigure[$\rho=0.1$]{ \scalebox{.30}{\includegraphics{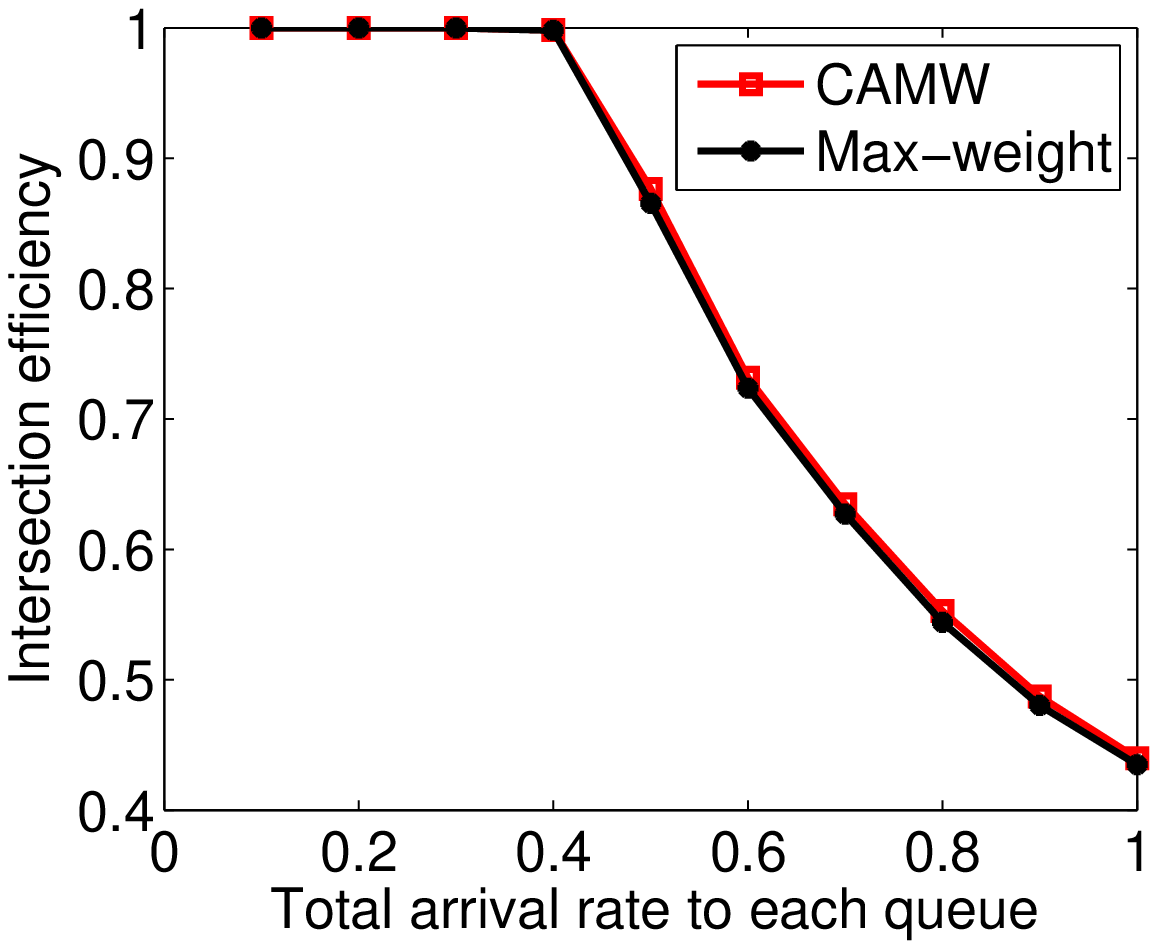}} } 
\subfigure[$\rho=0.4$]{ \scalebox{.30}{\includegraphics{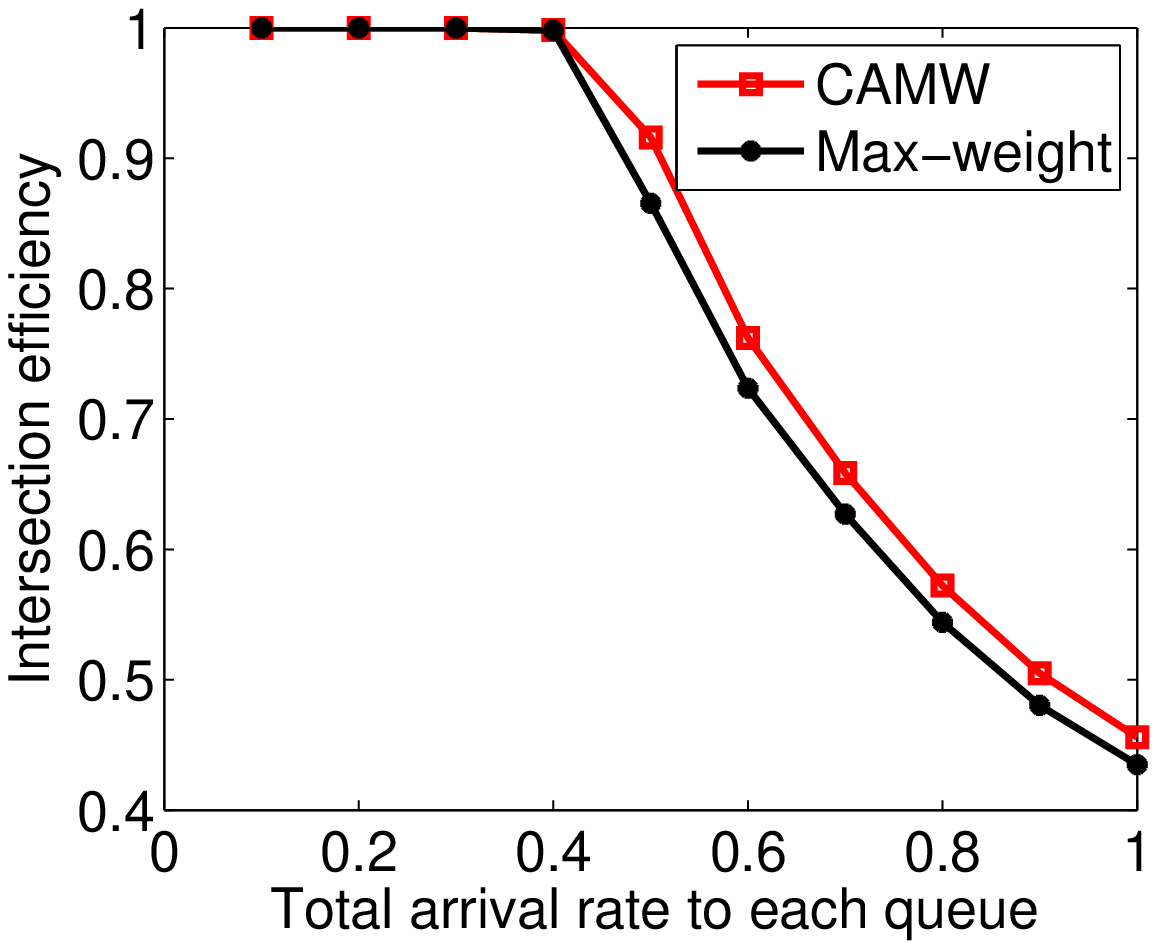}} } \\
\subfigure[$\rho=0.7$]{ \scalebox{.30}{\includegraphics{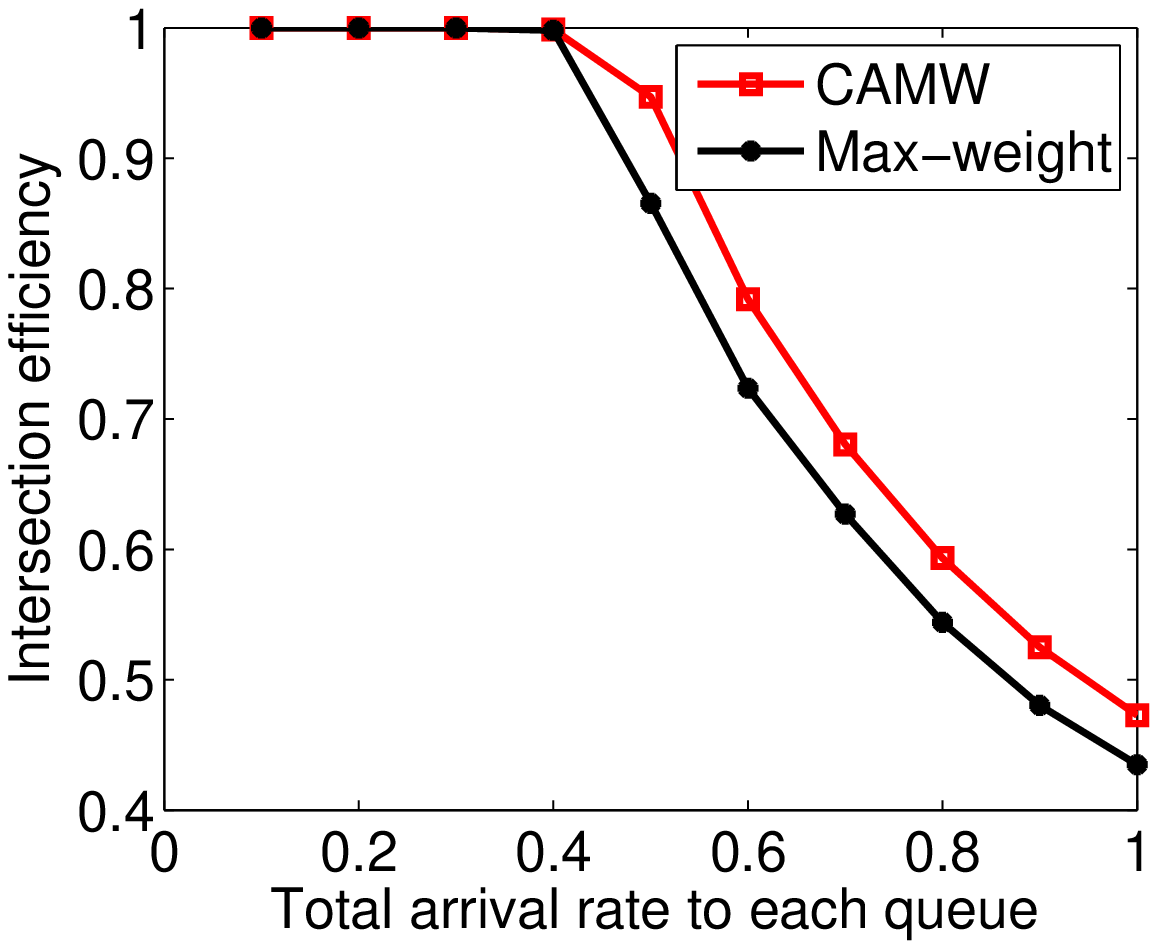}} }  
\subfigure[$\rho=1.0$]{ \scalebox{.30}{\includegraphics{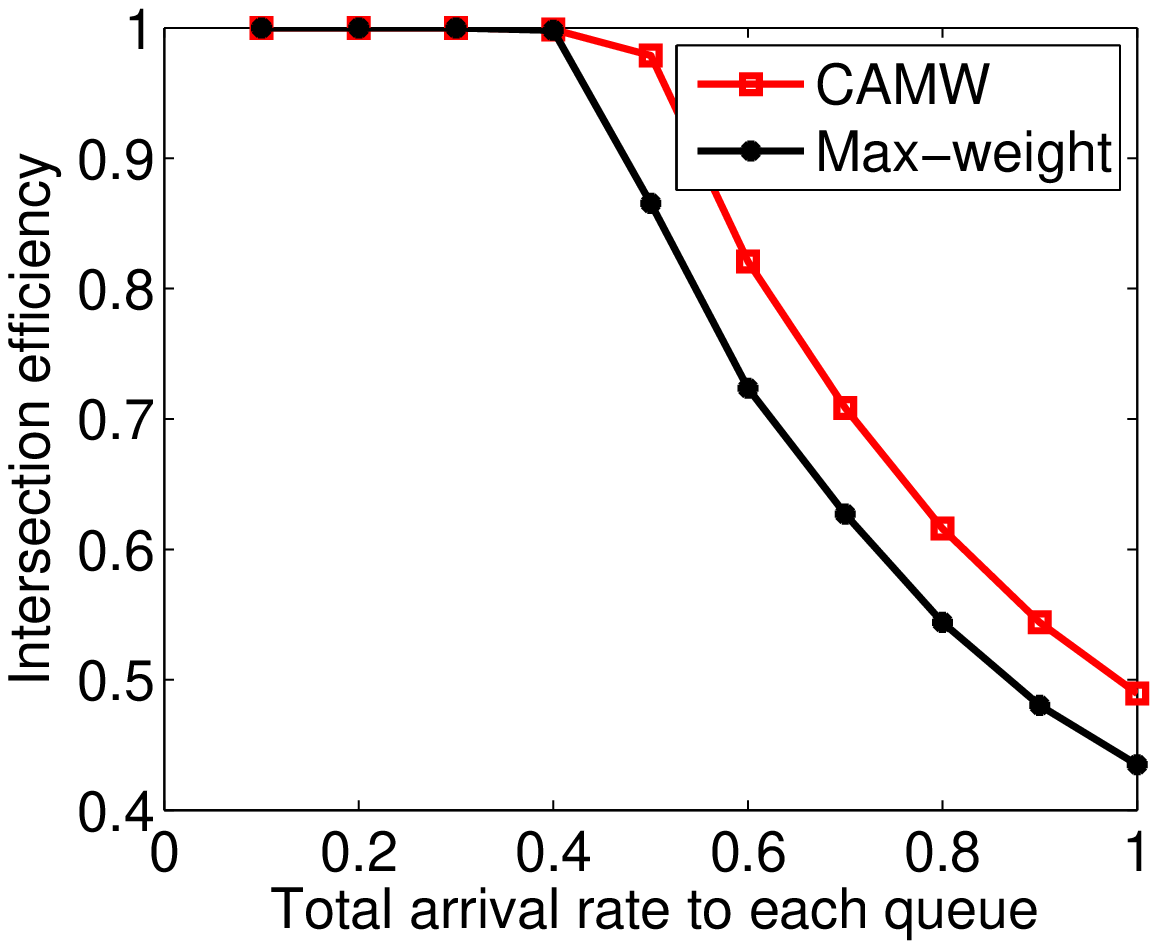}} }
\end{center}
\begin{center}
\vspace{-20pt}
\caption{\label{fig:2hol_thro_vs_arrival} Intersection efficiency versus total arrival rate to each queue with different communication probability $\rho$ for \modelII. Each queue has the same total arrival rate and $\lambda_1=\lambda_2$, and each green phase lasts for two time slots.}
\vspace{-35pt}
\end{center}
\end{figure}

\vspace{-5pt}
\section{Conclusion}\label{sec:conclusion}
In this paper, we considered a transportation system of heterogeneously connected vehicles, where not all vehicles are able to communicate. For this setup, we developed a connectivity-aware max-weight scheduling (CAMW) algorithm by taking into account the connectivity of vehicles. The crucial components of CAMW are expectation and learning components, which determine the estimated number of vehicles that can pass through the intersections by taking into account the heterogeneous communications. The simulations results show that CAMW algorithm significantly improves the intersection efficiency over max-weight. 

\bibliographystyle{abbrv}
\vspace{-5pt}
\bibliography{sigproc}  

\begin{thebibliography}{10}

\bibitem{times}
Transportation energy databook, edition 26.
\newblock {\em Time Magazine}.
\newblock \url{http://cta.ornl.gov/data/new_for_edition26.shtml}.

\bibitem{aboudolas}
K.~Aboudolas, M.~Papageorgiou, and E.~Kosmatopoulos.
\newblock Store-and-forward based methods for the signal control problem in
  large-scale congested urban road networks.
\newblock {\em Transp. Res.-C.}, 17(2):163--174, April 2009.

\bibitem{cascetta}
E.~Cascetta, M.~Gallo, and B.~Montella.
\newblock Models and algorithms for the optimization of signal settings on
  urban networks with stochastic assignment models.
\newblock {\em Ann. Oper. Res.}, 144(1):301--328, May 2006.

\bibitem{diakaki}
C.~Diakaki, M.~Papageorgiou, and K.~Aboudolas.
\newblock A multivariable regulator approach to traffic-responsive network-wide
  signal control.
\newblock {\em Control Eng. Practice}, 10(2):183--195, February 2002.

\bibitem{febbraro}
A.~D. Febbraro, D.~Giglio, and N.~Sacco.
\newblock On applying petri nets to determine optimal offsets for coordinated
  traffic light timings.
\newblock In {\em Proc. IEEE 5th Int. Conf. Intell. Transportation Syst.},
  pages 773--778, September 2002.

\bibitem{gartner}
N.~H. Gartner.
\newblock Opac: A demand-responsive strategy for traffic signal control.
\newblock {\em Transp. Res. Rec.}, (906):75--81, December 1983.

\bibitem{gregoire}
J.~Gregoire, E.~Frazzoli, A.~de~La~Fortelle, and T.~Wongpiromsarn.
\newblock Backpressure traffic signal control with unknown routing rates.
\newblock In {\em Proc. of IFAC World Congress}, August 2014.

\bibitem{jean}
J.~Gregoire, X.~Qian, E.~Frazzoli, A.~Fortelle, and T.~Wongpiromsarn.
\newblock Capacity-aware backpressure traffic signal control.
\newblock {\em IEEE Trans. Control Netw. Syst.}, 2(2):164--173, June 2015.

\bibitem{dirk}
D.~Heidemann.
\newblock Queue length and delay distributions at traffic signals.
\newblock {\em Transp. Res.-B.}, 28(5):377--389, October 1994.

\bibitem{henry}
J.-J. Henry, J.-L. Farges, and J.~Tuffal.
\newblock The prodyn real time traffic algorithm.
\newblock In {\em Proc. of the 4th IFAC/IFORS Conference on Control in
  Transportation Systems}. Baden-Baden, Federal Republic of Germany, April
  1983.

\bibitem{hunt}
P.~Hunt, D.~Robertson, R.~Bretherton, and M.~Royle.
\newblock The scoot on-line traffic signal optimisation technique.
\newblock {\em Traffic Eng. Control}, 23(4), 1982.

\bibitem{lo}
H.~K. Lo, E.~Chang, and Y.~C. Chan.
\newblock Dynamic network traffic control.
\newblock {\em Transp. Res.-A.}, 35(8):721--744, September 2001.

\bibitem{lowrie}
P.~Lowrie.
\newblock Scats, sydney co-ordinated adaptive traffic system: A traffic
  responsive method of controlling urban traffic.
\newblock {\em Roads and Traffic Authority NSW}, 1990.

\bibitem{miller}
A.~J. Miller.
\newblock Settings for fixed-cycle traffic signals.
\newblock {\em Oper. Res. Q.}, 14(4):373--386, December 1963.

\bibitem{pitu_ning}
P.~B. Mirchanadani and N.~Zou.
\newblock Queuing models for analysis of traffic adaptive signal control.
\newblock {\em IEEE Trans. Intell. Transp. Syst.}, 8(1):50--59, March 2007.

\bibitem{mirchandani}
P.~Mirchandani and L.~Head.
\newblock A real-time traffic signal control system: architecture, algorithms,
  and analysis.
\newblock {\em Transp. Res.-C.}, 9(6):415--432, December 2001.

\bibitem{neely}
M.~J. Neely.
\newblock {\em Dynamic power allocation and routing for satellite and wireless
  networks with time varying channels}.
\newblock PhD thesis, Massachusetts Institute of Technology, PhD thesis, LIDS,
  2003.

\bibitem{neely_modiano}
M.~J. Neely, E.~Modiano, and C.~E. Rohrs.
\newblock Dynamic power allocation and routing for time-varying wireless
  networks.
\newblock {\em IEEE J. Sel. Areas Commun.}, 23(1):89--103, January 2005.

\bibitem{newell}
G.~F. Newell.
\newblock Approximation methods for queues with applications to the fixed-cycle
  traffic light.
\newblock {\em SIAM Rev.}, 7(2):223--240, April 1965.

\bibitem{osorio}
C.~Osorio and M.~Bierlaire.
\newblock A surrogate model for traffic optimization of congested networks: an
  analytic queueing network approach.
\newblock {\em Report TRANSP-OR}, 90825:1--23, August 2009.

\bibitem{urban_mobility_report}
D.~Schrank, B.~Eisele, and T.~Lomax.
\newblock Urban mobility report.
\newblock December 2012.

\bibitem{tassiulas}
L.~Tassiulas and A.~Ephremides.
\newblock Stability properties of constrained queueing systems and scheduling
  policies for maximum throughput in multihop radio networks.
\newblock {\em IEEE Trans. Autom. Control}, 37(12):1936--1948, December 1992.

\bibitem{varaiya_09}
P.~Varaiya.
\newblock A universal feedback control policy for arbitrary networks of
  signalized intersections.
\newblock {\em tech. rep.}, September 2009.

\bibitem{varaiya_13}
P.~Varaiya.
\newblock Max pressure control of a network of signalized intersections.
\newblock {\em Transp. Res.-C.}, 36:177--195, November 2013.

\bibitem{varaiya}
P.~Varaiya.
\newblock {\em The max-pressure controller for arbitrary networks of signalized
  intersections}.
\newblock Advances in Dynamic Network Modeling in Complex Transportation
  Systems, Springer, 2013.

\bibitem{webster}
F.~V. Webster.
\newblock Traffic signal settings.
\newblock {\em Road Res. Lab., Ministry Transport, HMSO, London, U.K.}, pages
  1--43, 1958.

\bibitem{wongpir}
T.~Wongpiromsarn, T.~Uthaicharoenpong, Y.~Wang, E.~Frazzoli, and D.~Wang.
\newblock Distributed traffic signal control for maximum network throughput.
\newblock In {\em 2012 15th Int. Conf. Intell. Transportation syst. (ITSC)},
  pages 588--595. Anchorage, Alaska, USA, September 2012.

\bibitem{allerton15_tech}
S.~Zhou and H.~Seferoglu.
\newblock Blocking avoidance in transportation systems.
\newblock arXiv:1508.06173v2 [cs.SY], August 2015.

\end{thebibliography}
%
%
\appendix
\vspace{-5pt}
\section{Proof of Theorem 1} \label{proof:k(t)_m1}
In this section, we specifically focus on the calculation of $E(K^{\phi=1}_i(t))$, where $\phi=1$ corresponds to the phase in Fig.~\ref{fig:phases}(a) to explain our the proof in an easier way. Note that $E(K^{\phi=1}_i(t))$ calculation can be directly generalized to $E(K^{\phi}_i(t))$, $\forall \phi \in \Phi$.

We first derive the calculation of $E(K^{\phi=1}_i(t))$ when all communicating vehicles are going straight. The calculation of $E(K^{\phi=1}_i(t))$ for other cases will be obtained based on this derivation. If all communicating vehicles are going straight at time slot $t$, we can consider the queue as divided into $(T+1)$ {\em blocks} by the $T$ communicating vehicles. (Note that $T$ is the number of communicating vehicles in a queue). 

Let a random variable $J$ denote the number of vehicles that can pass the intersection. The probability that $j$ vehicles pass the intersection is $P[J=j]$, and it behaves similarly to the geometric distribution. However, the probability distribution is different when $j$ falls into different {\em blocks} due to the communicating vehicles that go straight. To be more precise, we have 

\begin{align}
P[J=j] & = \left\{\begin{array}{rl}p^j_1p_2,& 1 \leq j \leq v_1-2 \\ 
p^{j-1}_1p_2 , & v_1 \leq j \leq v_2-2 \\
\vdots\\
p^{j-T}_1p_2 , & v_T \leq j \leq n-1 \\
p^{n-T}_1 , & j=n \\
\end{array}\right.
\label{eqn:distribution_allstraight}
\end{align}
Note that $P[J=v_1-1]$, $P[J=v_2-1]$, $\cdots$, $P[J=v_T-1]$ are all 0. The reason is that the communicating vehicles at locations $v_1, v_2, \cdots, v_T$ are all going straight, and if $v_l-1$ vehicles can pass the intersection. Then, $v_l$ vehicles can pass the intersection for sure ($l=1,2,\cdots,T$). 

Using (\ref{eqn:distribution_allstraight}), we can obtain the expected number of vehicles that can pass the intersection as $E(K^{\phi=1}_i(t))$ when all communicating vehicles are going straight. \Ie
\begin{eqnarray}
E(K^{\phi=1}_i(t))=\sum_{j=1}^{v_1-2}jp^j_1p_2+\sum_{j=v_1}^{v_2-2}jp^{j-1}_1p_2+\cdots \nonumber \\
+\sum_{j=v_T}^{n-1}jp^{j-T}_1p_2+np^{n-T}_1
\label{eqn:estimation_k(t)_allstraight}
\end{eqnarray}
In (\ref{eqn:estimation_k(t)_allstraight}), $\sum_{j=v_l}^{v_{l+1}-2}jp^{j-l}_1p_2$ can be expressed as $p^{1-l}_1$ $p_2$ $\sum_{j=v_l}^{v_{l+1}-2}$ $jp^{j-1}_1=p^{1-l}_1p_2\frac{\partial(\sum_{j=v_l}^{v_{l+1}-2}p^j_1)}{\partial p_1}=\frac{p^{1-l}_1}{p_2}((p_1+p_2v_l)p^{v_l-1}_1+(1-2p_1-p_2v_{l+1})p^{v_{l+1}-2}_1)$. Thus, we can obtain $E(K^{\phi=1}_i(t))$ when all communicating vehicles are going straight as
\begin{eqnarray}
E(K^{\phi=1}_i(t))  =  \sum_{l=0}^{T}\frac{p^{1-l}_1}{p_2}((p_1+p_2v_l)p^{v_l-1}_1 \nonumber \\
 +  (1-2p_1-p_2v_{l+1})p^{v_{l+1}-2}_1)+np^{n-T}_1
\label{eqn:compact_k(t)_allstr8}
\end{eqnarray}
Note that we have $v_0=1, v_{T+1}=n+1$ in (\ref{eqn:compact_k(t)_allstr8}) to make it consistent with (\ref{eqn:estimation_k(t)_allstraight}). 

When there are some communicating vehicles going left, let $v_L(t)$ be the location of the first communicating vehicle that goes left. There are $(L-1)$ communicating vehicles in front of $v_L(t)$ that going straight and $(T-L)$ communicating vehicles behind $v_L(t)$ which will be blocked for sure. Now, we only focus on the vehicles between the location 1 to $(v_L(t)-1)$. There are $(L-1)$ communicating vehicles among them, and all of the communicating vehicles are going straight. Thus, we can use the similar analysis as used in (\ref{eqn:estimation_k(t)_allstraight}) except that now the maximum number of vehicles that can pass the intersection is $(v_L(t)-1)$ instead of $n$. Therefore, we have the expected number of vehicles that can pass the intersection $E(K^{\phi=1}_i(t))$ when the first communicating vehicle that turns left is at location $v_L(t)$. Thus, 
\begin{eqnarray}
E(K^{\phi=1}_i(t))  =  \sum_{l=0}^{L-1}\frac{p^{1-l}_1}{p_2}((p_1+p_2v_l)p^{v_l-1}_1  +  \nonumber \\
(1-2p_1-p_2v_{l+1})p^{v_{l+1}-2}_1)+(v_L(t)-1)p^{v_L(t)-L}_1
\end{eqnarray}

By taking into account all the $(T+1)$ situations, we conclude that 
\begin{align}
E(K^{\phi}_i(t)) = \begin{cases} \sum_{l=0}^{T(t)}\frac{p^{1-l}_1}{p_2}((p_1+p_2v_l(t))p^{v_l(t)-1}_1\nonumber\\
 + (1-2p_1-p_2v_{l+1}(t))p^{v_{l+1}(t)-2}_1)\nonumber\\
+np^{n-T(t)}_1, & \mbox{    { if} $C_1$  {holds}  }\\
\\
\sum_{l=0}^{L-1}\frac{p^{1-l}_1}{p_2}((p_1+p_2v_l(t))p^{v_l(t)-1}_1\nonumber \\
 + (1-2p_1-p_2v_{l+1}(t))p^{v_{l+1}(t)-2}_1)\nonumber \\
 +(v_L(t)-1)p^{v_L(t)-L}_1,   & \mbox{    { if} $C_2$  {holds.}  }
\end{cases} 
\\
\end{align}

By following the same analysis, we can obtain $E(K^{\phi}_i(t))$ for $\phi=2,3,4$. This concludes the proof.
\end{document}